\newtheorem{theorem}{Theorem}
\newtheorem{lemma}[theorem]{Lemma}
\newtheorem{corollary}[theorem]{Corollary}
\newtheorem{definition}[theorem]{Definition}
\newcommand{\pr}{{\bf Pr}}
\newcommand{\ex}{{\bf E}}
\newcommand{\etal}{{\it et al.}}
\newcommand{\eps}{{\epsilon}}
\title{A tight lower bound instance for k-means++ in constant dimension}
\author[1]{Anup Bhattacharya}
\author[1]{Ragesh Jaiswal \thanks{Corresponding author: \texttt{rjaiswal@cse.iitd.ac.in}}}
\author[2]{Nir Ailon \thanks{Nir Ailon acknowledges the support of a Marie Curie International Reintegration Grant PIRG07-GA-2010-268403, as well as the support of The Israel Science Foundation (ISF) no. 1271/13.}}
\affil[1]{IIT Delhi, India}
\affil[2]{Technion, Haifa, Israel}
\date{}
\begin{document}
\maketitle
\begin{abstract}
The k-means++ seeding algorithm is one of the most popular algorithms that is used for finding the initial $k$ centers when using the k-means heuristic. The algorithm is a simple sampling procedure and can be described as follows: 
\begin{quote}
Pick the first center randomly from the given points. 
For $i > 1$, pick a point to be the $i^{th}$ center with probability proportional to the square of the Euclidean distance of this point to the closest previously $(i-1)$ chosen centers.
\end{quote}
The k-means++ seeding algorithm is not only simple and fast but also gives an $O(\log{k})$ approximation in expectation as shown by Arthur and Vassilvitskii~\cite{ArthurV07}.
There are datasets~\cite{ArthurV07,AggarwalDK09} on which this seeding algorithm gives an approximation factor of $\Omega(\log{k})$ in expectation. 
However, it is not clear from these results if the algorithm achieves good approximation factor with reasonably high probability (say $1/poly(k)$). 
Brunsch and R\"{o}glin~\cite{br12} gave a dataset where the k-means++ seeding algorithm achieves an  $O(\log{k})$ approximation ratio with probability that is exponentially small in $k$. 
However, this and all other known {\em lower-bound examples} \cite{ArthurV07,AggarwalDK09} are high dimensional. 
So, an open problem was to understand the behavior of the algorithm on low dimensional datasets. 
In this work, we give a simple two dimensional dataset on which the seeding algorithm achieves an $O(\log{k})$ approximation ratio with probability exponentially small in $k$. 
This solves open problems posed by Mahajan \etal~\cite{mnv12} and by Brunsch and R\"{o}glin~\cite{br12}.
\end{abstract}

%\terms{Theory}
%\keywords{Clustering, k-means, k-means++}

\section{Introduction}

The k-means clustering problem is one of the most important problems in Data Mining and Machine Learning that has been widely studied. The problem is defined as follows:
\begin{quote}
{\bf (k-means problem)}: Given a set of $n$ points $X = \{x_1, ..., x_n\}$ in a $d$-dimensional space, find a set of $k$ points $C = \{c_1, ..., c_k\}$ (these are called {\em centers}) such that the cost function $\Phi_{C}(X) = \sum_{x \in X} \min_{c \in C} D(x, c)$ is minimized. 
Here $D(x,c)$ denotes the square of the Euclidean distance between points $x$ and $c$.
In the {\em discrete} version of this problem, the centers are constrained to be a subset of the given points $X$.
\end{quote}

The problem is known to be NP-hard even for small values of the parameters such as when $k=2$~\cite{d07} and when $d=2$~\cite{v09,mnv12}.
There are various approximation algorithms for the problem.
However, in practice, a heuristic known as the k-means algorithm (also known as Lloyd's algorithm) is used because of its excellent performance on real datasets even though it does not give any performance guarantees. 
This algorithm is simple and can be described as follows: 
\begin{quote}
{\bf (k-means Algorithm)}: (i) Arbitrarily, pick $k$ points $C$ as centers. (ii) Cluster the given points based on the nearest distance to centers in $C$. (iii) For all clusters, find the mean of all points within a cluster and replace the corresponding member of $C$ with this mean. Repeat steps (ii) and (iii) until convergence.
\end{quote}

Even though the above algorithm performs very well on real datasets, it guarantees only convergence to local minima.
This means that this {\em local search} algorithm may either converge to a local optimum solution or may take a large amount of time to converge~\cite{ArthurV06,ArthurV06b}. 
Poor choice of the initial $k$ centers (step (i)) is one of the main reasons for its bad performance with respect to approximation factor. 
A number of {\em seeding} heuristics have been suggested for choosing the initial centers. 
One such seeding algorithm that has become popular is the k-means++ seeding algorithm.
The algorithm is extremely simple and runs very fast in practice.
Moreover, this simple randomized algorithm also gives an approximation factor of $O(\log{k})$ in expectation~\cite{ArthurV07}.
In practice, this seeding technique is used for finding the initial $k$ centers to be used with the k-means algorithm and this ensures a theoretical approximation guarantee.
The simplicity of the algorithm can be seen by its simple description below:
\begin{quote}
{\bf (k-means++ seeding)}: Pick the first center randomly from the given points. 
After picking $(i-1)$ centers, pick the $i^{th}$ center to be a point $p$ with probability proportional to the square of the Euclidean distance of $p$ to the closest previously $(i-1)$ chosen centers.
%Pick a point to be the $i^{th}$ center ($i>1$) with probability proportional to the square of the Euclidean distance of this point to the previously $(i-1)$ chosen centers.
\end{quote}

A lot of recent work has been done in understanding the power of this simple sampling based approach for clustering.
We discuss these in the following paragraph.

\subsection{Related work} 
Arthur and Vassilvitskii~\cite{ArthurV07} showed that the sampling algorithm gives an approximation guarantee of $O(\log{k})$ in expectation. They also give an example dataset on which this approximation guarantee is best possible. Ailon \etal~\cite{AJMonteleoni09} and Aggarwal \etal~\cite{AggarwalDK09} showed that sampling more than $k$ centers in the manner described above gives a constant {\em pseudo-approximation}.\footnote{Here pseudo-approximation means that the algorithm is allowed to output more than $k$ centers but the approximation factor is computed by comparing with the optimal solution with $k$ centers.}
Ackermann and Bl\"{o}mer~\cite{ab10} showed that the results of Arthur and Vassilvitskii~\cite{ArthurV07} may be extended to a large class of other distance measures. 
Jaiswal \etal~\cite{jks12} showed that the seeding algorithm may be appropriately modified to give a $(1 + \epsilon)$-approximation algorithm for the k-means problem. 
Jaiswal and Garg~\cite{jg12} and Agarwal \etal~\cite{ajp13} showed that if the dataset satisfies certain separation conditions, then the seeding algorithm gives constant approximation with probability $\Omega(1/k)$. 
Bahmani \etal~\cite{b12} showed that the seeding algorithm performs well even when fewer than $k$ sampling iterations are executed provided that more than one center is chosen in a sampling iteration.
We now discuss our main results.

\subsection{Main results}
The lower-bound examples of Arthur and Vassilvitskii~\cite{ArthurV07} and Aggarwal \etal~\cite{AggarwalDK09} have the following two properties: (a) the examples are high dimensional and (b) the examples lower-bound the {\em expected} approximation factor. 
Brunsch and R\"{o}glin~\cite{br12} showed that the k-means++ seeding gives  
an approximation ratio of at most $(2/3 - \epsilon)\cdot \log{k}$ only with probability that is exponentially small in $k$.
They constructed a high dimensional example where this is not true and showed that an $O(\log{k})$ approximation is achieved with probability exponentially small in $k$.
An important open problem mentioned in their work is to understand the behavior of the seeding algorithm on low-dimensional datasets. 
This problem is also mentioned as an open problem by Mahajan \etal~\cite{mnv12} who showed that the {\em planar} (dimension=2) k-means problem is NP-hard.
In this work, we construct a two dimensional dataset on which the k-means++ seeding algorithm achieves an approximation ratio $O(\log{k})$ with probability exponentially small in $k$. 
More formally, here is the main theorem that we prove in this work.

\begin{theorem}[Main Theorem]\label{thm:main}
Let $r(k) = \delta \cdot \log{k}$ for a fixed real $\delta \in (0,\frac{1}{120})$. 
There exists a family of instances for which k-means++ achieves an $r(k)$-approximation with probability at most $2^{-k } + e^{\left(-(k-1)^{1- 120 \delta- o(1)}\right)}$.
\end{theorem}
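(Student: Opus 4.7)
The plan is to produce an explicit family of 2D instances on which the k-means++ seeding procedure fails with overwhelming probability. The instance will consist of $k$ tight point clusters, placed at specific locations $p_1, \ldots, p_k \in \mathbb{R}^2$ with carefully chosen multiplicities $n_1, \ldots, n_k$, so that the optimum partitions points one per cluster at near-zero cost, while any solution that leaves even one of a certain set of $(k-1)$ ``far'' clusters uncovered pays a cost of order $\log k$ times the optimum. The planar geometry must be arranged so that, from the perspective of the sampling process, whichever cluster currently hosts a chosen center has essentially negligible probability of being re-sampled, while the squared-distance weights of the remaining clusters are governed by 2D geometry in a way I can control.

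Next I would analyze the behavior of k-means++ on this instance. Writing $B$ for the event ``k-means++ achieves $r(k)$-approximation,'' the plan is to bound $\pr[B] \leq \pr[E_1] + \pr[B \cap \overline{E_1}]$, where $E_1$ is a ``rare good'' initial event with $\pr[E_1] \leq 2^{-k}$ --- typically the first (uniform) seed landing in a designated small sub-cluster that would help downstream sampling. Off of $E_1$, I would track the round-by-round probability that the next center lands in a ``useful'' cluster (one whose coverage is needed to stay within $r(k)\cdot \text{OPT}$), and show via a concentration argument that across $k-1$ rounds too few useful picks occur.

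To execute the concentration step I would introduce indicators $Y_i$ for each round $i=2,\ldots,k$, where $Y_i = 1$ iff the $i^{\mathrm{th}}$ chosen center lands in a useful cluster. The goal is to prove $\pr[Y_i = 1 \mid \mathrm{history}] \le p$ for some $p = k^{-\gamma}$ with $\gamma$ linked to $\delta$, uniformly over possible histories, and then apply a Chernoff-type (or martingale) bound to conclude that $\sum_i Y_i$ fails to reach the required threshold with probability at most $\exp\bigl(-(k-1)^{1-120\delta - o(1)}\bigr)$. The subtle point is that the per-round sampling distribution depends on the configuration of centers already placed and on 2D squared distances, so the bound $p$ must be shown to hold no matter which partial configuration has arisen.

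The main obstacle will be the geometric construction itself: in two dimensions the orthogonal-simplex trick of Brunsch and R\"oglin that makes all clusters mutually far is unavailable, so the contributions of far clusters to the potential inevitably interact, which tends to \emph{help} k-means++ discover uncovered clusters. Overcoming this forces a highly non-uniform placement --- plausibly along a curve whose inter-cluster spacings grow fast enough that the potential at each step is dominated by the single cluster closest to an uncovered one, yet slowly enough that the $\log k$ approximation gap survives. Verifying that this planar geometry maintains the needed per-round bias across all possible histories, and tuning the exponents so that the final failure probability carries the exponent $1-120\delta - o(1)$, will be the technical heart of the argument.
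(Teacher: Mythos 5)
Your high-level decomposition (a rare initial event of probability $2^{-k}$, plus a conditional analysis of how many ``useful'' clusters get covered across the remaining rounds) matches the paper's, but two of your concrete steps would fail. First, the construction: ``tight point clusters'' with near-zero optimal cost cannot work. If a covered cluster has negligible diameter, it contributes essentially zero potential once a center lands in it, so in every subsequent round virtually all of the remaining potential sits in uncovered clusters and k-means++ covers a fresh cluster in essentially every step --- it succeeds. The paper's instance gives each optimal cluster substantial \emph{internal} spread (points at heights $\pm 2^j r_i$ with multiplicities $m_i/4^j$) precisely so that a covered cluster retains residual potential $\Theta(kmr^2)$, comparable to its contribution to $\Phi^*$; that residual mass is what competes with the uncovered clusters and slows discovery. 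It additionally plants a huge group $G_0$ to pin the first center (this is where the $2^{-k}$ term comes from), and makes group populations decay as $4^{-i}$ while inter-group gaps grow as $2^{i}$, which is the balance that makes the planar geometry work.

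Second, and more fundamentally, your concentration step is unobtainable: no instance satisfies $\pr[Y_i = 1 \mid \mathrm{history}] \le k^{-\gamma}$ uniformly over histories. After only one center is placed, the uncovered clusters carry almost all of the potential, so the next center lands in an uncovered cluster with probability close to $1$, not $k^{-\gamma}$; moreover a uniform polynomially small bound would leave $\Omega(k)$ costly clusters uncovered with high probability, contradicting Arthur and Vassilvitskii's $O(\log k)$ bound on the \emph{expected} approximation ratio, which holds for every instance. The per-round bound must depend on the number $s$ of clusters covered so far: the paper proves $\Phi(X_u)/\Phi(X_c) \le z_s$ with $z_s$ large for small $s$ and small only as $s$ approaches $\bar k = k-1$, models the coverage count as a linear Markov chain with state-dependent transition probabilities $p_s = 1/(1+1/z_s)$, and bounds the probability that the sum of truncated waiting times $\sum_s \min(X_s,\Delta)$ stays below $\bar k$ via Hoeffding --- a coupon-collector-style tail, not a Chernoff bound on a uniformly rare event. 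That waiting-time analysis is where the exponent $1-120\delta-o(1)$ actually arises, and your outline has no mechanism to produce it.
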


Note that the theorem refutes the conjecture by Brunsch and R\"{o}glin~\cite{br12}.
They conjectured that the k-means++ seeding algorithm gives an $O(\log{d})$-approximation for any $d$-dimensional instance.

\subsection{Our techniques}
All the known lower-bound examples~\cite{ArthurV07,AggarwalDK09,br12} have the following general properties:
\begin{enumerate}
\item[(a)] All optimal clusters have equal number of points.
\item[(b)] The optimal clusters are high dimensional simplices.
\end{enumerate}
In order to construct a counterexample for the two dimensional case, we consider datasets that have different number of points in different optimal clusters. 
Our counterexample is shown in Figure~\ref{fig1}.
The optimal clusters (indicated in the figure using shaded areas) are along the vertical lines drawn along the $x$-axis.
In the next section, we will show that these are indeed the optimal clusters.
Note that the cluster sizes decrease exponentially going from left to right.
We say that an optimal cluster is {\em covered} by the algorithm if the algorithm picks a center from that optimal cluster.
We will use the following two high level observations to show the main theorem:
\begin{itemize}
\item {\bf Observation 1}: The algorithm needs to cover more than a certain minimum fraction of clusters to achieve a required approximation.

\item {\bf Observation 2}: After any number of iterations, the probability of sampling the next center from an uncovered cluster is not too large compared to the probability of sampling from a covered cluster.
\end{itemize}

We bound the probability of covering more than a certain minimum fraction of clusters by analyzing a simple Markov chain. This Markov chain is almost the same as the chain used by Brunsch and R\"{o}glin~\cite{br12}. 
We also borrow the analysis of the Markov chain from \cite{br12}.
So, in some sense, the main contribution of this paper is to come up with a two dimensional instance the analysis of which may be reduced to the Markov chain analysis in \cite{br12}.

In the next section, we give the details of our construction and proof.

\section{The Bad Instance} 

We provide a family of $2$-dimensional instances on which performance of k-means++ is bad in the sense of Theorem~\ref{thm:main}. 
This family is depicted in Figure \ref{fig1}. 
We first recursively define certain quantities that will be useful in describing the construction. 
Here $m$ is any positive integer, $r$ is any positive real number, and $\Delta$ is a positive real number dependent on $k$ (we will define this dependency later during analysis).
\begin{eqnarray*}
&& r_1 = r \quad \textrm{and} \quad \forall i, 2 \leq i < k, r_i = 2 \cdot r_{i-1} \\
&& m_1 = m \quad \textrm{and} \quad \forall i, 2 \leq i < k, m_i =  (1/4) \cdot m_{i-1}
\end{eqnarray*}

Note that the input points may overlap in our construction.
We will consider $k$ {\em groups} of points $G_0, ..., G_{k-1}$.
These groups are shown as shaded areas in Figure~\ref{fig1}.
They are located at only $k$ distinct $x$-coordinates.
These $k$ distinct $x$-coordinates are given by $(x_0, x_1, ..., x_{k-1})$, where 
$x_0 = 0, x_1 = \Delta \cdot r_1, x_2 = \Delta \cdot (r_1 + r_2), ..., x_{k-1} = \Delta \cdot (r_1 + ... + r_{k-1})$.
The $i^{th}$ group, $G_i$, consists of points that have the $x$-coordinate $x_i$.
We will later show that $G_0, ..., G_{k-1}$ is actually the optimal k-means clustering for our instance.
Group $G_0$ has $12 k 2^k m$ points located at $(x_0, 0)$. 
For all $i \geq 1$, group $G_i$ has $4km_i$ points located at $(x_i, 0)$, and for all $0 \leq j < k$, $G_i$ has $\frac{m_i}{4^j}$ points located at each of $\left(x_i, 2^j r_i \right)$ and $\left(x_i, - 2^j r_i \right)$.

Let the total number of points on $i^{th}$ group be denoted by $M_i$. 
Therefore, we can write summing points across all locations on that cluster to get the following:
\begin{eqnarray}\label{eqn:mass}
\forall i \geq 1, M_i &=& 4 k m_i  + 2m_i + 2 (m_i/4)  + ... + 2 (m_i/4^{k-1}) \nonumber \\
&=& 4 k m_i + 2 m_i (1 + 1/4 + ... + 1/4^{k-1})
\end{eqnarray}

Note that $M_{i+1}=M_i/4$.
\begin{figure}[ht]        
        \label{fig1}
        \begin{center}
                \includegraphics[scale=0.4]{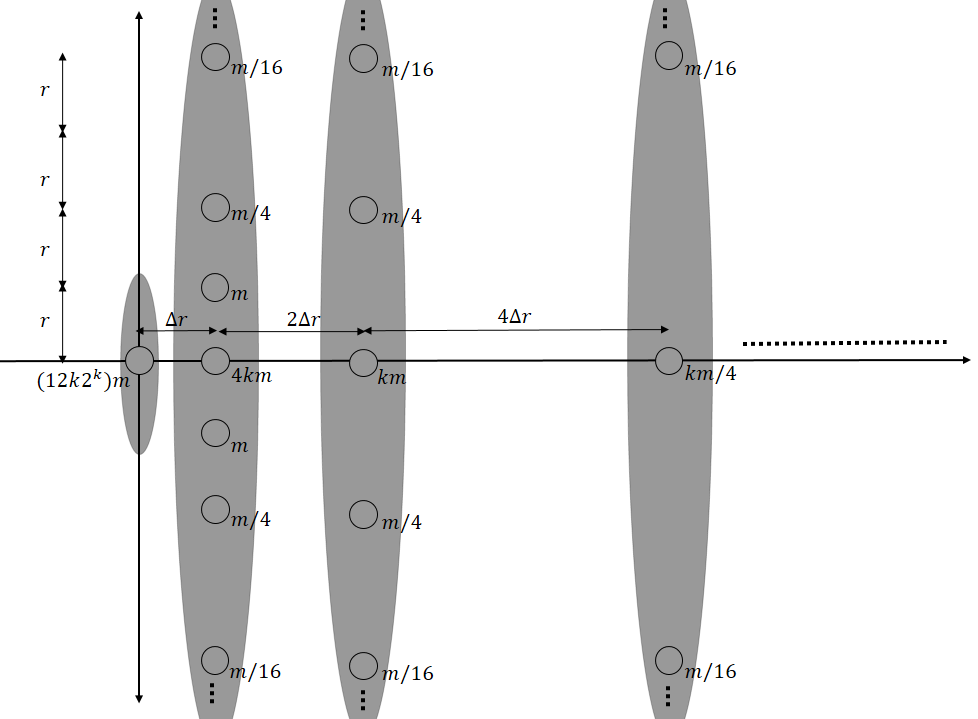}
        \end{center}
        \caption{$2$-D example instance showing the $0^{th}$, $1^{st}$, $2^{nd}$, and $3^{rd}$ optimal clusters only. Note that this figure is not to scale.}
\end{figure}

%\begin{definition}{Potential of a location} is defined as the product of the number of points in that location times the squared distance from that location to its nearest (chosen) center.
%\end{definition}

\subsection{Optimal solution for our instance} 
%Groups $G_0, G_1, ..., G_{k-1}$ is just one way of partitioning the given points.
We consider the following partitioning of the given points: 
Let $H_0$ denote the subset of points on the $x$-axis and for $|i| \geq 1$,
let $H_i$ denote the subset of all points that are located at $y$-coordinate $sgn(i) \cdot 2^{|i|-1} \cdot r$.
%So, for our instance there are $(2k+1)$ such subsets.
For any point $p \in H_i$, we say that point $p$ is in {\em level} $i$.
Given the above definitions of group and level, the location of a point may be defined by a tuple $(i, j)$, where $i$ denotes the index of the group to which this point belongs and $j$ denotes the level of this point.

Given a set $C$ of centers and a subset of points $Y$, the potential of $Y$ with respect to $C$ is given by  $\Phi_{C}(Y) = \sum_{y \in Y} \min_{c \in C} D(y, c)$. 
Furthermore, the potential of a location $l = (i, r)$ with respect to $C$ is defined as $\Phi_{C}(l) = \sum_{p \textrm{ located at } l} D(p, C)$.
Here, $D(p, C) = \min_{c \in C} D(p, c)$.
Given a set of locations $L = \{l_1, ..., l_s\}$ and a subset of points $Y$, $\Phi_{L}(Y)$ denotes the potential of points $Y$ with respect to a set of centers located at locations in $L$.

We start by showing certain basic properties of our instance.

\begin{lemma}\label{lemma:2}
Let $|j| \geq i > 0$. The total number of points  at level $j$ in group $i$ is $\frac{m}{4^{|j| - 1}}$.
\end{lemma}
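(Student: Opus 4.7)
The plan is to unwind the definitions, match the $y$-coordinates used in the construction of $G_i$ with the $y$-coordinates that define the levels $H_j$, and then substitute the closed form for $m_i$.

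First I would observe that, by the recurrence $r_i = 2 r_{i-1}$ with $r_1=r$, we have $r_i = 2^{i-1} r$, and similarly $m_i = m/4^{i-1}$. From the construction, the points of $G_i$ that are not on the $x$-axis lie at $y$-coordinates $\pm 2^{j'} r_i$ for $0 \le j' < k$, with exactly $m_i/4^{j'}$ points at each such location. Substituting $r_i = 2^{i-1} r$ gives $2^{j'} r_i = 2^{j'+i-1} r$, which matches the definition of level $H_t$ exactly when $|t| = j'+i$ (with sign determined by the sign of the $y$-coordinate).

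Next I would handle the constraint $|j| \ge i > 0$. Assuming first $j \ge i > 0$, the unique value of $j'$ corresponding to level $j$ in group $i$ is $j' = j - i \ge 0$, and the number of points at this location is
\[
\frac{m_i}{4^{j-i}} \;=\; \frac{m/4^{i-1}}{4^{j-i}} \;=\; \frac{m}{4^{j-1}} \;=\; \frac{m}{4^{|j|-1}}.
\]
The case $j \le -i$ follows identically by reflection, since the construction places the same number of points at $(x_i, -2^{j'} r_i)$ as at $(x_i, 2^{j'} r_i)$.

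Finally I would note that the constraint $|j| \ge i$ is what guarantees $j' = |j| - i \ge 0$, so a valid contributing location in $G_i$ exists at level $j$; otherwise $G_i$ contributes no points to $H_j$ (except on the $x$-axis, which is the case $j=0$, excluded by $|j|\ge i > 0$). The only thing to be careful about is the one-to-one correspondence between the two parametrizations of the $y$-coordinates, but once $r_i = 2^{i-1}r$ is in place this is immediate; no probabilistic or geometric argument is needed.
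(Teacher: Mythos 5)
Your proof is correct and follows essentially the same route as the paper's: identify the off-axis locations of $G_i$ as $\pm 2^{j'}r_i = \pm 2^{j'+i-1}r$, match this with the level definition to get $j' = |j|-i$, and compute $m_i/4^{|j|-i} = m/4^{|j|-1}$. You spell out the correspondence and the sign/reflection case a bit more explicitly than the paper does, but the argument is the same.
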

\begin{proof}
The points for group $i$ are located at distance $\pm r_i, \pm 2r_i, \pm 4r_i, \ldots$.
Since $r_i = 2^{i-1} \cdot r$, this means that the points in $G_i$ are located at $\pm 2^{i-1} r, \pm 2^{i} r, \pm 2^{i+1} r, \ldots$.
So, the number of points is given by $\frac{m_i}{4^{|j|-i}} = \frac{m}{4^{|j|-1}}$.
\end{proof}

\begin{lemma}\label{lemma:3}
For all $i > 0$ and $|j| > 0$, %$\Phi_{\{(i, 0)\}}(H_{j})  - \Phi_{\{(i, j)\}}(H_{j}) \leq kmr^2$.
\begin{equation*}
	\Phi_{\{(i, 0)\}}(H_{j})  - \Phi_{\{(i, j)\}}(H_{j}) \leq kmr^2
\end{equation*}
\end{lemma}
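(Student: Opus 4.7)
The plan is to evaluate the difference $\Phi_{\{(i,0)\}}(H_j) - \Phi_{\{(i,j)\}}(H_j)$ directly, exploiting the fact that both candidate centers share the same $x$-coordinate $x_i$. Every point $p\in H_j$ sits at some location of the form $(i',j)$ with coordinates $(x_{i'},\, sgn(j)\cdot 2^{|j|-1} r)$, whereas $(i,0)$ lies at $(x_i,0)$ and $(i,j)$ lies at $(x_i,\, sgn(j)\cdot 2^{|j|-1} r)$. Subtracting the two squared Euclidean distances from $p$ annihilates the horizontal term $(x_{i'}-x_i)^2$ and leaves exactly the squared vertical offset $(2^{|j|-1} r)^2 = 4^{|j|-1} r^2$. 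Crucially, this per-point gap depends on $|j|$ only, not on $i$ or $i'$.

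Next I would count $|H_j|$ using Lemma~\ref{lemma:2}. For each group index $i'$ with $i'\le |j|$ (together with $i'\ge |j|-k+1$ when $|j|>k-1$, to ensure $G_{i'}$ actually reaches level $j$), the lemma supplies $m/4^{|j|-1}$ points at location $(i',j)$, and no other group contributes to level $j$. The number of contributing groups is $|j|$ when $|j|\le k-1$ and $2k-1-|j|$ when $|j|\ge k$; in either regime it is bounded above by $k-1$. Hence $|H_j|\le (k-1)\cdot m/4^{|j|-1}$, and multiplying by the per-point gap yields
\[
\Phi_{\{(i,0)\}}(H_j) - \Phi_{\{(i,j)\}}(H_j) \;=\; |H_j|\cdot 4^{|j|-1} r^2 \;\le\; (k-1)\, m r^2 \;<\; k m r^2,
\]
which is the desired bound.

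There is no real obstacle here: the entire argument rests on the observation that the horizontal components of the two squared distances cancel (since both prospective centers lie at $x$-coordinate $x_i$), and that the factor $4^{|j|-1}$ coming from the squared vertical offset exactly cancels the $1/4^{|j|-1}$ factor appearing in the per-location point counts of Lemma~\ref{lemma:2}. What remains is simply a count of how many groups populate level $j$, and this count is at most $k-1$ for any nonzero $j$.
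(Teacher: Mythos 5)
Your proof is correct and follows essentially the same route as the paper: the per-point gap between the two squared distances is exactly $(2^{|j|-1}r)^2$ (the horizontal terms cancel since both centers sit at $x$-coordinate $x_i$), and Lemma~\ref{lemma:2} gives $m/4^{|j|-1}$ points per contributing group, so the product telescopes to $mr^2$ per group. You merely count the contributing groups slightly more carefully (at most $k-1$ rather than the paper's bound of $k$), which gives a marginally tighter constant but changes nothing substantive.
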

\begin{proof}
From Lemma~\ref{lemma:2}, we know that the total number of points at level $j$ of any group is either $0$ or $m/4^{|j|-1}$. 
The net change in the squared Euclidean distance of any point in $H_j$ with respect to locations $(i, 0)$ and $(i, j)$ is $(2^{|j|-1} r)^2$. 
So, the total change in potential is at most $k \cdot \frac{m}{4^{|j|-1}} \cdot (2^{|j|-1} r)^2 = k m r^2$.
\end{proof}

\begin{lemma}\label{lemma:4}
For all $i > 0$ and $|j| > 0$,  
\begin{eqnarray*}
&&\Phi_{\{(i, 0)\}}(H_{j+1\cdot sgn(j)} \cup H_{j+2\cdot sgn(j)} \cup ...) \leq\\
&&~~~\Phi_{\{(i, j)\}}(H_{j+1\cdot sgn(j)} \cup H_{j+2\cdot sgn(j)} \cup ...)+2kmr^2.
\end{eqnarray*}
\end{lemma}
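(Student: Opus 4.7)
\textbf{Proof plan for Lemma~\ref{lemma:4}.}
My plan is a direct per-point computation followed by a geometric sum. Fix a point $p$ located at $(x_{i'}, y_p)$ in some group $i'$ at level $\ell$ with $\mathrm{sgn}(\ell) = \mathrm{sgn}(j)$ and $|\ell| \geq |j|+1$, so that $y_p = \mathrm{sgn}(j)\cdot 2^{|\ell|-1} r$. Writing $d_x = x_{i'}-x_i$, the squared distance from $p$ to the center at $(i,0)$ is $d_x^2 + (2^{|\ell|-1} r)^2$, while to the center at $(i,j)$ it is $d_x^2 + (2^{|\ell|-1}-2^{|j|-1})^2 r^2$, using $\mathrm{sgn}(\ell)=\mathrm{sgn}(j)$ so the $y$-displacements subtract. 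Expanding gives the per-point excess
\[
D(p,(i,0)) - D(p,(i,j)) \;=\; r^2\bigl(2^{|\ell|+|j|-1} - 2^{2|j|-2}\bigr) \;\leq\; r^2 \cdot 2^{|\ell|+|j|-1}.
\]

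Next, I sum over all such points. By Lemma~\ref{lemma:2}, each group contains at most $m/4^{|\ell|-1}$ points at level $\ell$, and since there are fewer than $k$ groups, the total number of points at level $\ell$ is at most $k\,m/4^{|\ell|-1}$. Parametrize the relevant levels as $\ell = \mathrm{sgn}(j)(|j|+t)$ for $t\geq 1$. The contribution to the potential difference from level $\ell$ is then at most
\[
k\cdot \frac{m}{4^{|j|+t-1}} \cdot r^2\cdot 2^{2|j|+t-2} \;=\; k m r^2 \cdot 2^{-t}.
\]
Wait, redo the exponent check: with $|\ell|+|j|-1 = 2|j|+t-1$ and $4^{|j|+t-1}=2^{2|j|+2t-2}$, the ratio is $2^{(2|j|+t-1)-(2|j|+2t-2)} = 2^{1-t}$. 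So the per-level bound is $kmr^2\cdot 2^{1-t}$, and summing over $t\geq 1$ gives the geometric series
\[
\sum_{t\geq 1} kmr^2\cdot 2^{1-t} \;=\; 2kmr^2,
\]
which is exactly the claimed bound.

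The only real obstacle is careful bookkeeping: one must verify that $\mathrm{sgn}(\ell)=\mathrm{sgn}(j)$ for every level in the union (which is built into its definition), confirm that the $x$-contribution $d_x^2$ cancels cleanly on both sides so that the difference depends only on $y$-coordinates, and track the exponents so that the final geometric sum yields the tight constant $2$ rather than a weaker $O(1)$. No deeper ideas beyond Lemma~\ref{lemma:2} and the geometry of the construction are needed.
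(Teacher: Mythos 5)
Your proof is correct and follows essentially the same route as the paper's: bound the per-point excess $r^2(2^{|\ell|+|j|-1}-2^{2|j|-2})$ by dropping the negative term, multiply by the level-$\ell$ point count $km/4^{|\ell|-1}$ from Lemma~\ref{lemma:2}, and sum the resulting geometric series $\sum_{t\geq 1} kmr^2\,2^{1-t}=2kmr^2$. The only difference is cosmetic (you index levels by the offset $t$ and handle general $\mathrm{sgn}(j)$ directly, where the paper assumes $j>0$ WLOG).
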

\begin{proof}
WLOG assume that $j > 0$. 
From Lemma~\ref{lemma:2}, we can get an upper bound in the following manner:
\begin{eqnarray*}
&&\!\!\!\!\!\!\!\!\!\!\Phi_{\{(i, 0)\}}(H_{j+1} \cup H_{j+2} \cup \cdots) - \Phi_{\{(i, j)\}}(H_{j+1} \cup H_{j+2} \cup \cdots) \\
&\leq& \sum_{t=1}^{k} \sum_{l=j+1}^{\infty} \frac{m}{4^{l-1}} \cdot r^2 \left((2^{l-1})^2 - (2^{l-1} - 2^{j-1})^2\right) \\
&=& k m r^2 \cdot  \sum_{l=j+1}^{\infty} \frac{1}{4^{l-1}} \cdot \left(2^{l+j-1} - 2^{2j-2}\right) \\
&\leq& k m r^2 \cdot  \sum_{l=j+1}^{\infty} 2^{j - l + 1} \\
&\leq& 2 k m r^2
\end{eqnarray*}
\end{proof}

Let $C$ denote a set of optimal centers for the k-means problem. 
Let $L$ denote the set of locations of these centers.
We will show that $L = \{(0, 0), (1, 0), ..., (k-1, 0)\}$.
We start by showing some simple properties of the set $L$.
We will need the following additional definitions:
We say that a group $G_i$ is {\em covered} with respect to $C$ if $C$ has at least one center from group $G_i$. Group $G_i$ is said to be {\em uncovered} otherwise.

\begin{lemma}
$(0, 0) \in L$.
\end{lemma}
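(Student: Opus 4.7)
The key intuition is the overwhelming mass at the origin: $G_0$ puts $12 k 2^k m$ points at $(0,0)$, exponentially larger than the mass $O(km)$ of any other group. Any optimal center set must therefore place a center essentially at $(0,0)$, and a centroid argument should upgrade \emph{essentially} to \emph{exactly}.

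I argue by contradiction. Suppose $(0,0)\notin L$ and let $c^\star\in L$ be the center closest to $(0,0)$, so $c^\star\neq(0,0)$. First, I upper-bound $\Phi_L$ by exhibiting the candidate center set $L^\star=\{(0,0),(1,0),\ldots,(k-1,0)\}$: since $m_i r_i^2 = m r^2$ for every $i\geq 1$, a direct computation shows that each $G_i$ with $i\geq 1$ contributes exactly $2kmr^2$ under $L^\star$ while $G_0$ contributes $0$, so $\Phi_L \leq \Phi_{L^\star} = 2k(k-1)mr^2$. Second, I lower-bound $\Phi_L$ using $G_0$ alone: because $c^\star$ is the closest center to $(0,0)$, all $12k2^k m$ points of $G_0$ are assigned to $c^\star$ and contribute at least $12 k 2^k m \cdot \|c^\star\|^2$. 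Combining gives
\[
\|c^\star\|^2 \;\leq\; \frac{(k-1)\, r^2}{6\cdot 2^k},
\]
so $c^\star$ is squeezed extremely close to the origin.

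Next I use the centroid property of optimal continuous $k$-means: $c^\star$ equals the centroid $\bar{A}$ of its cluster $A$. Writing $A = A_0 \sqcup B$ where $A_0$ is the set of $(0,0)$-points (all of which lie in $A$ since $c^\star$ is their nearest center), if $B = \emptyset$ then $c^\star = \bar{A} = (0,0)$, contradicting $c^\star \neq (0,0)$. Otherwise every $p\in B$ lies in some $G_i$ with $i\geq 1$ and hence has $x$-coordinate at least $x_1 = \Delta r$, giving
\[
c^\star_x \;=\; \bar{A}_x \;=\; \frac{|B|}{|A|}\,\bar{B}_x \;\geq\; \frac{\Delta r}{|A|}.
\]
Combined with $|c^\star_x| \leq \|c^\star\| \leq r\sqrt{(k-1)/(6\cdot 2^k)}$, this forces $\Delta \leq |A|\sqrt{(k-1)/(6\cdot 2^k)}$. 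Since $|A|$ is at most the total mass $O(k 2^k m)$ of the instance, choosing $\Delta$ sufficiently large as a function of $k$ (e.g.\ $\Delta \gg k^{3/2} 2^{k/2} m$) violates this inequality and closes the contradiction.

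The only delicate point is making sure the eventual choice of $\Delta$---which the paper fixes later and must also satisfy requirements from the Markov-chain analysis---is large enough to validate the bound above. Apart from that, the argument only uses continuous $k$-means optimality (so that $c^\star$ is literally $\bar{A}$) and the one-sided structure of the instance (every non-$G_0$ point has strictly positive $x$-coordinate), both of which are immediate from the construction.
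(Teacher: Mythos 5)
Your overall architecture (upper bound $\Phi_L \leq \Phi_{L^\star} = 2k(k-1)mr^2$ via the candidate centers, then a contradiction from the mass at the origin) starts the same way as the paper, but the way you extract the contradiction does not close for the actual instance. Because the centroid of $A = A_0 \sqcup B$ is dragged off the origin by only a $|B|/|A|$ fraction of $\Delta r$, charging the origin points for the displacement $\|c^\star\|$ yields only the inequality $\Delta \leq |A|\sqrt{(k-1)/(6\cdot 2^k)}$, and with $|A| = O(k2^k m)$ you need $\Delta \gg k^{3/2} 2^{k/2} m$ to violate it. That requirement is incompatible with the construction: the paper later fixes $\Delta = \bigl\lceil \sqrt{\alpha}\exp\bigl(80\alpha\tfrac{1+\epsilon}{4}\bigr)\bigr\rceil$ with $\alpha = \delta\log\bar{k}$ and $\delta < 1/120$, so $\Delta$ is at most a small power of $k$ times $\sqrt{\log k}$, and moreover $m$ must itself be of order $4^{\Theta(k)}$ for the multiplicities $m_i/4^j$ to be integers. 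So the ``delicate point'' you flag is precisely where the proof fails, and it cannot be repaired by choosing $\Delta$ larger, since $\Delta$ is pinned down by the Markov-chain analysis.

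The reason your bound is so lossy is that you charge the $12k2^k m$ origin points only for the tiny displacement of the centroid, which loses both the factor $m$ and a factor of roughly $2^k$. The paper instead argues that if $(0,0)$ is not among the center locations, then the location serving the origin points is at squared distance at least $\Delta^2 r^2 \geq r^2$ from the origin (the nearest other location being $(1,0)$ at $x$-coordinate $\Delta r$), so the origin points alone already cost $12k2^k m r^2$; comparing with $2k(k-1)mr^2$, both sides carry the same factor $mr^2$ and the comparison reduces to $6\cdot 2^k > k-1$, which holds for all $k$ with no condition on $\Delta$ or $m$. If you want to keep your continuous/centroid formulation, the charge has to be placed on the separation between $A_0$ and $B$ inside the cluster (or on the points of $B$, which sit at $x \geq \Delta r$ while their center sits near the origin), not on the displacement of the centroid itself.
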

\begin{proof}
Let $L' = \{(0, 0), (1, 0), ..., (k-1, 0)\}$. Then we have:
\begin{eqnarray*}
\Phi_{L'}(X) &=& \sum_{i=1}^{k-1} 2 \left(m_i r_i^2 + \frac{m_i}{4} (2 r_i)^2 + ... + \frac{m_i}{4^{k-1}} (2^{k-1}r_i)^2 \right)\\
 &=& \sum_{i=1}^{k-1} 2k \cdot m_i r_i^2 \\
 &=& 2k(k-1)m r^2.
\end{eqnarray*}
Let $L''$ be any set of locations that do not include $(0, 0)$, then $\Phi_{L''}(X) \geq 12k2^k m r^2$ (since the nearest location to $(0, 0)$ is $(1, 0)$).
So, $L$ necessarily includes the location $(0, 0)$.
\end{proof}

\begin{lemma}\label{lemma:axis}
For any $i$, if group $G_i$ is covered with respect to $C$, then $(i, 0) \in L$.
\end{lemma}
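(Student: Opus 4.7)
The plan is to argue by contradiction: assume $G_i$ is covered by $C$ but $(i,0)\notin L$, and construct an alternative location set $L'$ of the same size $k$ with $\Phi_{L'}(X)<\Phi_L(X)$, which would contradict the optimality of $L$.

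The first step is to observe that, for $\Delta$ taken sufficiently large (and $\Delta$ is still a free parameter of the construction), every optimal $L$ must cover each of the $k$ groups $G_0,\ldots,G_{k-1}$ exactly once. Suppose not; then some group $G_{i''}$ is uncovered, so every one of its $M_{i''}$ points is at $x$-distance at least $\Delta\cdot r$ from every center, contributing at least $M_{i''}(\Delta r)^{2}$ to $\Phi_L(X)$. By pigeonhole (since $|L|=k$ equals the number of groups) some other group $G_{i'}$ has $|L\cap G_{i'}|\ge 2$; removing any one center from $L\cap G_{i'}$ and adding $(i'',0)$ in its place keeps $|L|=k$, caps the contribution of $G_{i''}$ at $M_{i''}\cdot 4^{k+i''-2}r^{2}$ (the maximum squared $y$-distance in $G_{i''}$), and inflates the contribution of $G_{i'}$ only by an amount bounded by the internal scale of $G_{i'}$---a quantity with no factor of $\Delta$. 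For $\Delta$ large enough this swap strictly decreases the total potential, contradicting optimality. So every group is covered and, since $|L|=k$ equals the number of groups, each is covered exactly once; in particular $|L\cap G_i|=1$.

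Given this structure, write $L\cap G_i=\{(i,j)\}$ with $j\ne 0$ and set $L'=(L\setminus\{(i,j)\})\cup\{(i,0)\}$. For any $i'\ne i$ the unique in-group center of $G_{i'}$ is the same in $L$ and $L'$, and for $\Delta$ large every point of $G_{i'}$ is served by that in-group center in both configurations, so $X\setminus G_i$ contributes identically to $\Phi_L$ and $\Phi_{L'}$. For points in $G_i$, both $L$ and $L'$ contain a single in-group center, so the restricted potentials are $\Phi_{\{(i,j)\}}(G_i)$ and $\Phi_{\{(i,0)\}}(G_i)$ respectively. The construction is symmetric in $y$ about the $x$-axis---the masses at $(x_i,+2^{l}r_i)$ and $(x_i,-2^{l}r_i)$ are equal and the mass at $(x_i,0)$ lies on the axis---so the centroid of $G_i$ is exactly $(x_i,0)=(i,0)$. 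The parallel-axis identity then yields
\begin{equation*}
\Phi_{\{(i,j)\}}(G_i) - \Phi_{\{(i,0)\}}(G_i) \;=\; M_i\cdot (2^{|j|-1}r)^{2} \;>\; 0,
\end{equation*}
so $\Phi_{L'}(X)<\Phi_L(X)$, the desired contradiction.

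The hard part is the first step: turning ``$\Delta$ sufficiently large'' into a concrete quantitative choice in terms of $k$, $m$, and $r$ that forces every optimal $L$ to cover every group exactly once. Once that calibration is in place, the second step is essentially the one-line parallel-axis computation above, and Lemmas~\ref{lemma:3} and~\ref{lemma:4} are not actually needed for this particular conclusion (they will presumably be used downstream when analysing the behaviour of the sampling algorithm itself).
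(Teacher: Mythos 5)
There is a genuine gap, and it is concentrated in the phrase ``for $\Delta$ sufficiently large.'' In this construction $\Delta$ is not a free parameter: the downstream Markov-chain analysis forces $\Delta=\lceil\sqrt{\alpha}\,e^{20\alpha(1+\eps)}\rceil=\bar{k}^{20\delta(1+o(1))}$ up to polylog factors (see (\ref{eqn:E3}); the final bound in Lemma~\ref{lemma:20} degrades as $\exp(-\bar{k}\cdot 2\eps^2u^2/(9\Delta^2))$ and becomes vacuous if $\Delta$ grows too fast). Meanwhile each group $G_{i'}$ has points out to $y$-coordinate $2^{k-1}r_{i'}$ while adjacent groups are only $\Delta r_{i'}$ apart horizontally, so at the actual value of $\Delta$ the groups are \emph{not} well separated relative to their internal extent. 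This breaks both halves of your argument. In step~1, ``removing any one center from $L\cap G_{i'}$'' can cost up to $4km_{i'}\cdot\min(4^{k-1},\Delta^2)\,r_{i'}^2=4kmr^2\min(4^{k-1},\Delta^2)$ (remove $(i',0)$ and the $4km_{i'}$ points there must travel to a far off-axis center or to another group), which is at least as large as the gain you claim from covering $G_{i''}$; moreover your own cap $M_{i''}4^{k+i''-2}r^2$ on the newly covered group already exceeds the uncovered contribution $M_{i''}(\Delta r)^2$ unless $\Delta>2^{k-1}$. In step~2, the premise that every point of every group is served by its unique in-group center in both $L$ and $L'$ likewise needs $\Delta\gtrsim 2^{k}$, and the parallel-axis computation is an \emph{equality}, so it requires that exact assignment; it is not justified in the regime the paper needs.

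The paper's proof sidesteps all of this by using only one-sided inequalities that exploit the geometric decay of the masses with level: it takes the center $(i,j)\in L$ farthest from the axis, replaces it by $(i,0)$, upper-bounds the resulting increase by $\Phi_{\{(i,0)\}}(H_j\cup H_{j+1}\cup\cdots)-\Phi_{\{(i,j)\}}(H_j\cup H_{j+1}\cup\cdots)\le 3kmr^2$ via Lemmas~\ref{lemma:3} and~\ref{lemma:4} --- a bound independent of both $\Delta$ and $2^{k}$ precisely because the number of points at level $l$ shrinks like $4^{-l}$ while the squared distances grow like $4^{l}$ --- and lower-bounds the decrease by the $4kmr^2$ contributed by the $4km_i$ points at $(x_i,0)$. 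So your closing remark that Lemmas~\ref{lemma:3} and~\ref{lemma:4} are not needed here is exactly backwards: they are the device that makes the exchange argument work at the small $\Delta$ the main theorem requires. (Your logical order --- first ``every group is covered exactly once,'' then the axis claim --- also reverses the paper's, which proves this lemma without assuming all groups are covered and only then derives Lemma~\ref{lemma:distributed}; the reordering would be acceptable if your coverage step were sound, but it suffers from the same $\Delta$ issue.)
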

\begin{proof}
For the sake of contradiction, assume that $(i, 0) \notin L$. 
Let $(i, j) \in L$ be the location that is farthest from the $x$-axis among the locations of the form $(i, .) \in L$.
Consider the set of locations $L' = (L \setminus \{(i, j)\}) \cup \{(i, 0)\}$.
We will now show that $\Phi_{L'}(X) < \Phi_{L}(X)$.
WLOG let us assume that $j$ is positive.
The change in center location does not decrease the potential of $H_j, H_{j+1}, ...$, does not increase the potential of $H_{j-1}, H_{j-2}, ...$, and does not increase the potential of points on the $x$-axis. 
From Lemmas~\ref{lemma:3} and \ref{lemma:4}, we have that the increase in potential is at most $3kmr^2$. 
On the other hand, since the contribution of the points located at $(i, 0)$ to the total potential changes from $4kmr^2$ to $0$, the total decrease in potential is at least $4 k m r^2$. 
So, we have that the total potential decreases and hence $\Phi_{L'}(X) < \Phi_{L}(X)$. 
This contradicts the fact that $L$ denotes the location of the optimal centers.
\end{proof}

\begin{lemma}\label{lemma:distributed}
All groups are covered with respect to $C$.
\end{lemma}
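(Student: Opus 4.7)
The plan is to argue by contradiction via a swap, in the same spirit as the proof of Lemma~\ref{lemma:axis}. Suppose some group $G_i$ is uncovered with respect to $C$. Since there are exactly $k$ centers distributed among $k$ groups and one is empty, the pigeonhole principle forces some group $G_{i'}$ to contain at least two centers. By Lemma~\ref{lemma:axis}, one of them must lie at $(i', 0)$; let $(i', j)$ with $j \neq 0$ be another, and assume without loss of generality that $j > 0$. I will consider the alternative location set
\[
L' \;=\; (L \setminus \{(i', j)\}) \cup \{(i, 0)\},
\]
and derive $\Phi_{L'}(X) < \Phi_L(X)$, contradicting the optimality of $L$.

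To upper-bound the potential increase caused by deleting $(i', j)$, note that any point whose nearest center under $L$ is different from $(i', j)$ can only improve (or stay the same) under $L'$, since the other centers of $L$ persist and $(i, 0)$ is added. Hence only points previously served by $(i', j)$ can cause any increase, and each such point $p$ can be reassigned to $(i', 0) \in L'$ at additional cost at most $D(p, (i', 0)) - D(p, (i', j))$. A direct comparison of squared distances shows that any $p$ with $D(p, (i', j)) < D(p, (i', 0))$ must have $y$-coordinate strictly greater than $2^{j-2} r$, which given our discrete level structure forces $p \in H_j \cup H_{j+1} \cup \cdots$. Summing over this superset of the affected points and applying Lemmas~\ref{lemma:3} and~\ref{lemma:4} bounds the total increase by $k m r^2 + 2 k m r^2 = 3 k m r^2$.

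For the potential decrease caused by inserting $(i, 0)$, it suffices to account for the $4 k m_i$ points located at $(x_i, 0)$. Since $G_i$ is uncovered under $L$, the closest center in $L$ to $(x_i, 0)$ lies in some group $G_{i''}$ with $i'' \neq i$, so its distance is at least $\min(|x_i - x_{i-1}|,\, |x_{i+1} - x_i|) = \Delta r_i$. After the swap, these points sit exactly on the new center $(i, 0)$ and contribute $0$. Thus the decrease is at least
\[
4 k m_i (\Delta r_i)^2 \;=\; 4 k m \Delta^2 r^2,
\]
using $m_i r_i^2 = (m/4^{i-1}) \cdot r^2 \cdot 4^{i-1} = m r^2$. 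Provided $\Delta$ is chosen large enough that $\Delta^2 > 3/4$, which will surely be compatible with the eventual choice of $\Delta$ made later in the analysis, this strictly exceeds the $3 k m r^2$ upper bound on the increase, giving $\Phi_{L'}(X) < \Phi_L(X)$.

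The main technical point I expect to be a minor obstacle is the geometric observation pinning the set of points potentially served by $(i', j)$ inside $H_j \cup H_{j+1} \cup \cdots$; once this is established, the existing Lemmas~\ref{lemma:3} and~\ref{lemma:4} supply a clean constant-times-$kmr^2$ bound on the increase, while the decrease is driven by a single heavy cluster of axis points and conveniently evaluates to a quantity independent of the index $i$ of the uncovered group.
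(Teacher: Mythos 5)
Your proof is correct and takes essentially the same route as the paper's: the same swap $L' = (L\setminus\{(i',j)\})\cup\{(i,0)\}$, the same appeal to Lemmas~\ref{lemma:3} and~\ref{lemma:4} to bound the increase by $3kmr^2$, and the same lower bound on the decrease coming from the $4km_i$ axis points of the uncovered group (the paper states this decrease simply as $4kmr^2$, implicitly using $\Delta \geq 1$, whereas you retain the factor $\Delta^2$). Your explicit geometric check that only points in $H_j \cup H_{j+1} \cup \cdots$ can be hurt by the swap is a detail the paper leaves implicit but is the same underlying argument.
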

\begin{proof}
For the sake of contradiction, assume that there is a group $G_i$ that is uncovered. 
This means that there is another group $G_j$ such that there are at least two locations from $G_j$ that is present in $L$.
Note that from the previous lemma $(j, 0) \in L$.
Let $(j, l) \in L$ for some $l > 0$.
We now consider the set of locations $L' = (L \setminus \{(j, l)\}) \cup \{(i, 0)\}$.
We will now show that $\Phi_{L'}(X) < \Phi_{L}(X)$.
Since $(j, 0) \in L$, the change in center location does not decrease the potential of $H_l, H_{l+1}, ...$, does not increase the potential of $H_{l-1}, H_{l-2}, ...$ and does not increase the potential of points on the $x$-axis. 
From Lemmas~\ref{lemma:3} and \ref{lemma:4}, we have that the increase in potential is at most $3kmr^2$. On the other hand, since the contribution of the points located at $(i, 0)$ to the total potential changes from $4kmr^2$ to $0$, the total decrease in potential is at least $4 k m r^2$. 
So, we have that the total potential decreases and hence $\Phi_{L'}(X) < \Phi_{L}(X)$. This contradicts the fact that $L$ denotes the location of the optimal centers.
\end{proof}

The following is a simple corollary of Lemmas~\ref{lemma:axis} and \ref{lemma:distributed}.

\begin{corollary}
Let $C$ denote the optimal set of centers for our k-means problem instance and let $L$ denote the location of these optimal centers. Then $L = \{(0, 0), (1, 0), ..., (k-1, 0)\}$.
\end{corollary}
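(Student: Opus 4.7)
The corollary is essentially a direct assembly of the two preceding lemmas, so my plan is to carry out a short pigeonhole argument rather than anything substantial.

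First, I would invoke Lemma~\ref{lemma:distributed}, which states that every one of the $k$ groups $G_0, G_1, \ldots, G_{k-1}$ must contain at least one center of the optimal set $C$. Since $|C| = k$ and there are exactly $k$ groups, each of which must receive at least one center, a pigeonhole argument forces each group to receive \emph{exactly} one center; in particular, $L$ consists of $k$ distinct locations, one per group.

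Next, I would apply Lemma~\ref{lemma:axis}: for every covered group $G_i$, the unique location of $L$ lying in $G_i$ must be $(i,0)$, i.e.\ on the $x$-axis. Combining these two facts across all $i \in \{0, 1, \ldots, k-1\}$ yields $L = \{(0,0),(1,0),\ldots,(k-1,0)\}$, which is exactly the claim.

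I do not anticipate any real obstacle in this step; both lemmas have already done the work. The only mild subtlety is being careful to observe that the total budget of $k$ centers matches the number of groups, so covering each group forces exactly one center per group (otherwise Lemma~\ref{lemma:axis} alone could not pin down the full location set). Once this is noted, the corollary follows in two lines.
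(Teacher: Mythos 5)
Your proof is correct and matches the paper's intent exactly: the paper states the corollary as an immediate consequence of Lemmas~\ref{lemma:axis} and \ref{lemma:distributed} without further argument, and your assembly (all $k$ groups covered, hence by pigeonhole one center per group, hence by Lemma~\ref{lemma:axis} each located at $(i,0)$) is the intended two-line derivation. One could equally note that the two lemmas directly give $\{(0,0),\ldots,(k-1,0)\} \subseteq L$ and $|L| \leq k$ forces equality, but this is the same argument in a slightly different order.
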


\subsection{Potential of the optimal solution}
Let us denote the potential of the optimum solution by $\Phi^{*}$. 
Since optimum chooses its centers only from locations on the $x$-axis, we can compute $\Phi^{*}$ as follows:
\begin{eqnarray}\label{eqn:optimal-potential}
\Phi^{*} &=& \sum_{i=1}^{k-1} 2 \cdot (m_ir_i^2+\frac{m_i}{4}(2r_i)^2+\cdots+\frac{m_i}{4^{k-1}}(2^{k-1} r_i)^2) \nonumber\\
&=& \sum_{i=1}^{k-1} 2 k m_i r_i^2 \nonumber \\
&=& 2 k (k-1) m r^2
\end{eqnarray}

\section{Analysis of k-means++ for our instance}

We will first show that with very high probability, the first center chosen by the k-means++ seeding algorithm is located at the location $(0, 0)$.
This is simply due to the large number of points located at the location $(0, 0)$ and the fact that the first center is chosen uniformly at random from all the given points.

\begin{lemma}\label{lemma:event} 
Let $p$ be the location of the first center chosen by the k-means++ seeding algorithm. Then $\pr[p \neq (0, 0)] \leq 2^{-k}$.
\end{lemma}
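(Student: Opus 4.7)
The plan is to observe that since the first center is chosen uniformly at random from the input points, the probability in question is exactly the ratio of points not located at $(0,0)$ to the total number of points. So the whole argument reduces to a counting estimate that exploits the enormous multiplicity $12 k 2^k m$ of points placed at $(0,0)$.

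First I would count the points at $(0,0)$: by construction these are exactly the points of $G_0$, giving $|G_0| = 12 k 2^k m$. Next I would upper bound the number of points away from $(0,0)$, namely the total size of $G_1 \cup \cdots \cup G_{k-1}$. Using the identity \eqref{eqn:mass} one has
\[
M_i = 4 k m_i + 2 m_i \sum_{j=0}^{k-1} 4^{-j} \;\leq\; (4k + 8/3)\, m_i,
\]
and the relation $M_{i+1} = M_i / 4$ turns $\sum_{i=1}^{k-1} M_i$ into a geometric series with ratio $1/4$, starting at $M_1$. Summing,
\[
\sum_{i=1}^{k-1} M_i \;\leq\; \tfrac{4}{3} M_1 \;\leq\; \tfrac{4}{3}\bigl(4k + 8/3\bigr) m,
\]
which is easily bounded by, say, $8 k m$ for all $k \geq 1$.

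Finally I would just take the ratio:
\[
\pr[p \neq (0,0)] \;=\; \frac{\sum_{i=1}^{k-1} M_i}{|G_0| + \sum_{i=1}^{k-1} M_i} \;\leq\; \frac{8 k m}{12 k 2^k m} \;=\; \frac{2}{3 \cdot 2^k} \;<\; 2^{-k}.
\]
There is really no obstacle here; the only thing to be careful about is plugging in the right constants so that the slack afforded by the factor $2^k$ in $|G_0|$ absorbs the $O(k)$ overhead in $M_1$. The large multiplicity of $G_0$ was clearly engineered precisely so that this calculation goes through with room to spare.
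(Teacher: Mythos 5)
Your proposal is correct and follows essentially the same route as the paper: both arguments reduce the claim to a counting estimate, using $M_i = m_i(4k+2\omega(k))$ and the geometric decay $M_{i+1}=M_i/4$ to bound $\sum_{i\geq 1} M_i$ by $O(km)$, which the $12k2^km$ points at $(0,0)$ dominate by a factor of $2^k$. The only nit is that your intermediate bound $\tfrac{4}{3}(4k+8/3)m \leq 8km$ fails numerically at $k=1$ (where the sum is empty anyway, so the lemma still holds); the paper sidesteps this by using the looser constant $12k$.
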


\begin{proof}
For any $i \geq 1$ let $\omega(i) = 1 + 1/4 + ... + 1/4^{i-1} = (4/3) \cdot (1 - 1/4^i)$. 
Since the first center is chosen uniformly at random, we have:
{
\allowdisplaybreaks
\begin{eqnarray*}
\pr[p = (0, 0)] &=& \frac{M_0}{M_0 + M_1 + ... + M_{k-1}} \\
&=& \frac{M_0}{M_0 + \sum_{i=1}^{k-1} m_i \cdot (4k + 2\omega(k))} \\
&& \textrm{(since from (\ref{eqn:mass}), $M_i = m_i (4k + 2 \omega(k))$)} \\
&=& \frac{M_0}{M_0 + \sum_{i=1}^{k-1} \frac{m}{4^{i-1}} \cdot (4k + 2\omega(k))} \\
&=& \frac{M_0}{M_0 + m \cdot \omega(k-1) \cdot (4k + 2\omega(k))} \\
&=& \frac{(12 k) \cdot 2^{k}}{(12 k) \cdot 2^k + \omega(k-1) \cdot (4k + 2 \omega(k))} \\
&\geq& \frac{(12 k) \cdot 2^{k}}{(12 k) \cdot 2^k + (4/3) \cdot (4k + (8/3))}\\
&\geq& \frac{(12 k) \cdot 2^{k}}{(12 k) \cdot 2^k + 12k} \quad \textrm{(since $k \geq 1$)}\\
&\geq& 1 - 2^{-k}
\end{eqnarray*}
}
\end{proof}

Let us define the following event:
\begin{definition}\label{defn:event}
$\xi$ denotes the event that the location of the first chosen center is $(0, 0)$.
%the first center chosen is located at the location $(0, 0)$.
\end{definition}

Lemma~\ref{lemma:event} shows that $\xi$ happens with a very high probability.
We will do the remaining analysis conditioned on the event $\xi$. 
We will later use the above lemma to remove the conditioning. 
The advantage of using this event is that once the first center has the location $(0, 0)$, computing an upper-bound on the potential of any location becomes easy.
This is because we can compute potential with respect to the center at location $(0, 0)$. 
Computing such upper bounds will be crucial in our analysis.

Our analysis closely follows that of \cite{br12}. 
%We show that the probability of k-means++ achieving $O(\log{k})$ approximation is very small. 
%We will show that for our instance, the expected number of clusters covered in $k$ iterations by the k-means++ algorithm is small compared to $k$.
Let us analyze the situation after $(1 + t)$ iterations of the k-means++ seeding algorithm (given that the event $\xi$ happens).
Let $C_t$ denote the set of chosen centers.
Let $s \leq t$ denote the number of optimal clusters among $G_1, ..., G_{k-1}$ that are covered by $C_t$.
Let $X_c$ denote the points in these covered clusters and $X_u$ denote the points in the uncovered clusters.
Conditioned on $\xi$, the probability that the next center will be chosen from $X_u$ is $\frac{\Phi(X_u)}{\Phi(X_u) + \Phi(X_c)}$.
So, the probability of covering a previously uncovered cluster in iteration $(t+2)$ depends on the ratio $\frac{\Phi(X_u)}{\Phi(X_c)}$.
The smaller this ratio, the smaller is the chance of covering a new cluster.
We will show that this ratio is small for most iterations of the algorithm.
This means that even when the algorithm terminates, there are a number of uncovered clusters. 
This implies that the algorithm gives a solution that is worse compared to the optimal solution.
In order to upper-bound the ratio $\frac{\Phi(X_u)}{\Phi(X_c)}$, we will upper bound the value of $\Phi(X_u)$ and lower-bound the value of $\Phi(X_c)$.
We state these bounds formally in the next two lemmas.

\begin{lemma}\label{lemma:11}
$\Phi(X_c) \geq (2s - 1) \cdot \frac{k m r^2}{4}$.
\end{lemma}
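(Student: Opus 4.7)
The plan is to lower-bound the per-cluster contribution $\Phi(G_i)$ for each of the $s$ covered clusters individually, and then sum. I would show that $\Phi(G_i) \geq k m r^2 / 2$ for every covered $G_i$ with $i \geq 1$; summing over the $s$ covered clusters then yields $\Phi(X_c) \geq s k m r^2 / 2 \geq (2 s - 1) k m r^2 / 4$, so the bound stated in the lemma is a loose (but convenient) form of a slightly stronger per-cluster estimate.

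For each covered cluster $G_i$, let $n_i \geq 1$ denote the number of centers of $C_t$ sitting inside $G_i$, so that $\sum_{\text{covered } i} n_i = t$. Provided $\Delta$ is taken large enough (the value is fixed only later in the analysis), the closest center in $C_t$ to any point of $G_i$ lies inside $G_i$: the center at $(0,0)$ is at squared distance at least $x_i^2 = \Delta^2(r_1 + \dots + r_i)^2$, and centers in any other covered group $G_j$ are farther still, so picking $\Delta$ large relative to the geometric factor $2^k$ rules both out of the nearest-center competition for every point in $G_i$. Under this condition, $\Phi(G_i)$ equals the potential of $G_i$ with respect only to its $n_i$ internal centers, which is at least the optimal $n_i$-means cost of $G_i$.

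To lower-bound that $n_i$-means cost, the optimizer would place the centers at $(x_i, 0)$ together with the high-mass pairs $(x_i, \pm r_i), (x_i, \pm 2 r_i), \dots$; the remaining uncovered locations $(x_i, \pm 2^j r_i)$ carry $m_i/4^j$ points each and sit at squared distance $\Theta(4^j r_i^2)$ from the nearest covered location, so each such location contributes $\Theta(m_i r_i^2) = \Theta(m r^2)$ to the potential. Summing over the $\Theta(k - n_i)$ uncovered pairs gives $\Phi(G_i) = \Omega((k - n_i) m r^2)$, which stays above $k m r^2 / 2$ across the entire relevant range $1 \leq n_i \leq k - 1$; this reuses the same telescoping structure as the closed-form computation in equation (\ref{eqn:optimal-potential}).

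The main obstacle is the extreme case $s = 1$ with $n_i = t = k - 1$, where a single cluster absorbs almost all of the sampled centers and the per-cluster bound is tightest; here a careful geometric summation of the contributions from the uncovered high levels is needed to clear the threshold, and the looseness in the lemma statement (coefficient $1/4$ rather than $1/2$) is precisely what absorbs the worst-case shortfall. One must also verify that the $\Delta$ bound invoked to exclude external centers is compatible with whatever value of $\Delta$ is eventually fixed in the later analysis.
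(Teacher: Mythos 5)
There is a genuine gap, and it is structural: this lemma cannot be obtained by proving a uniform per\-/cluster lower bound and multiplying by $s$. Your central claim, $\Phi(G_i)\ge kmr^2/2$ for every covered cluster, fails in precisely the regime you flag as ``the main obstacle.'' A covered cluster may contain up to $n_i=t\le k-1$ of the sampled centers; $G_i$ has $2k+1$ locations, so only $2k+1-n_i\ge k+2$ of them are uncovered, and an uncovered location at $(x_i,2^jr_i)$ contributes as little as $\frac{m_i}{4^{j}}\,(2^{j-1}r_i)^2=mr^2/4$ once a center sits at the level directly below it. An alternating placement of $k-1$ centers inside one group therefore drives $\Phi(G_i)$ down to roughly $(k+O(1))\,mr^2/4$, which is below $kmr^2/2$ for all large $k$. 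Your own estimate ``$\Omega((k-n_i)mr^2)$'' makes the failure visible, since at $n_i=k-1$ it degenerates to $\Omega(mr^2)$ (the correct count is $2k+1-n_i$ uncovered \emph{locations}, not $k-n_i$ uncovered \emph{pairs}), and the factor-of-two slack between $s\cdot kmr^2/2$ and $(2s-1)\cdot kmr^2/4$ cannot absorb this. The slack is also the wrong shape: even the best uniform per-cluster bound, about $kmr^2/4$, summed over $s$ clusters gives only $s\cdot kmr^2/4$, which falls short of $(2s-1)\cdot kmr^2/4$ for every $s\ge 2$. What makes the lemma true is the global constraint $\sum_i n_i=t\le k-1$: the covered clusters cannot all be in their worst case simultaneously. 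The paper exploits this by counting uncovered locations across \emph{all} covered clusters at once --- there are at least $(2k+1)s-t\ge(2s-1)k$ of them, each contributing at least $mr^2/4$ --- which yields the bound in one line. Any per-cluster decomposition must retain the dependence on $n_i$, i.e.\ prove $\Phi(G_i)\ge(2k+1-n_i)\,mr^2/4$ and sum using $\sum_i n_i=t$; at that point it is the same counting argument.

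A secondary gap is your reduction to ``the optimal $n_i$-means cost of $G_i$,'' which requires that no center outside $G_i$ is ever the nearest center to a point of $G_i$. As you yourself note, this needs $\Delta$ large compared to $2^{k}$, but the later analysis fixes $\Delta(\bar{k})=\bigl\lceil\sqrt{\alpha}\exp\bigl(20\alpha(1+\eps)\bigr)\bigr\rceil$ with $\alpha=\delta\log\bar{k}$, which is only polynomial in $k$; the hypothesis you would need is simply not available. The paper avoids the internal/external dichotomy altogether: it lower-bounds each uncovered location's contribution by the squared distance to the nearest location of the instance, so the argument does not hinge on $\Delta$ dominating $2^k$.
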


\begin{proof}
For any covered cluster $G_i$ for $i>0$, we know that $G_i$ has points at levels $0, i-1, -i+1, i, -i, i+1, \ldots$.
For any such location $(i, j)$ (except location $(i, 0)$) such that $C_t$ does not have a center at this location, the contribution of the points at this location to $\Phi(X_c)$ is at least $\frac{m_i}{4^{|j|-1}} \cdot (2^{|j|-1} - \max(2^{|j|-2}, 1))^2 \cdot r_i^2 \geq mr^2/4 $. 
Furthermore, the contribution of points at location $(i, 0)$ in case $C_t$ does not contain a center from this location, is at least $mr^2$.
Therefore,
\begin{eqnarray*}
\Phi(X_c) &\geq& ((2k+1) \cdot s - t) \cdot \frac{mr^2}{4} \\
&\geq& (2s-1) \cdot \frac{k m r^2}{4} \quad \textrm{(since $t \leq k-1$)}
\end{eqnarray*}
\end{proof}

\begin{lemma}\label{lemma:12}
$\Phi(X_u) \leq (40 k) \cdot (k - s-1) m r^2 \Delta^2$.
\end{lemma}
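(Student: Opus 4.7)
The plan is to bound $\Phi(X_u)$ by replacing the full set of centers $C_t$ with the single center $(0,0)$, which is available thanks to event $\xi$. Since $(0,0)\in C_t$, we have $\Phi_{C_t}(X_u)\le \Phi_{\{(0,0)\}}(X_u)$, and it therefore suffices to upper-bound the squared-distance potential of each uncovered group $G_i$ to the origin and sum over at most $k-s-1$ uncovered indices among $\{1,\dots,k-1\}$.

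For a single group $G_i$, I would plug in the description of the instance. Its $4km_i$ on-axis points contribute $4km_i\cdot x_i^2$, and for each $0\le j< k$ the two symmetric level-$(\pm j)$ locations contribute $2\cdot\frac{m_i}{4^{j}}\bigl(x_i^2+(2^{j}r_i)^2\bigr)$. Collecting terms gives the clean identity
\begin{equation*}
\Phi_{\{(0,0)\}}(G_i)\;=\;M_i\cdot x_i^2\;+\;2k\,m_i\,r_i^2 ,
\end{equation*}
since the $y$-contributions telescope: $\sum_{j=0}^{k-1}\frac{1}{4^{j}}(2^{j}r_i)^2=k\,r_i^2$.

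Next I would estimate each factor. From $r_j=2^{j-1}r$ one gets $x_i=\Delta r(2^i-1)$, hence $x_i^2\le \Delta^2 r^2\cdot 4^{i}$. From $m_i=m/4^{i-1}$ and $r_i^2=4^{i-1}r^2$ we obtain the two telescoping identities $m_i\cdot 4^{i}=4m$ and $m_i r_i^2=mr^2$. Finally, $M_i=m_i(4k+2\omega(k))\le m_i(4k+8/3)$. Combining, $M_i x_i^2\le (16k+32/3)\,mr^2\Delta^2$ and $2km_i r_i^2=2kmr^2$. Assuming $\Delta\ge 1$ (as will be set later in the analysis), each uncovered group therefore satisfies $\Phi_{\{(0,0)\}}(G_i)\le 40k\,mr^2\Delta^2$, and summing over the $k-s-1$ uncovered groups among $G_1,\dots,G_{k-1}$ yields the claim.

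There is no genuine obstacle here, only bookkeeping: the one thing to be careful about is not to lose the factor $\Delta^2$ by sloppily bounding $x_i^2$, and to use the telescoping cancellations $m_i\cdot 4^{i}=4m$ and $m_i r_i^2=mr^2$ to pull the per-group bound out of its exponential dependence on $i$. The use of $\xi$ to replace $C_t$ with $\{(0,0)\}$ is what keeps the argument tight, since the rest of $C_t$ could in principle be anywhere but can only decrease the potential.
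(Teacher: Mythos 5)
Your proposal is correct and follows essentially the same route as the paper: both use the event $\xi$ to bound $\Phi(X_u)$ by $\sum \Phi_{\{(0,0)\}}(G_i)$ over uncovered groups, split each term as $M_i x_i^2 + 2k m_i r_i^2$ (the paper writes this as $\Phi_{\{(i,0)\}}(G_i) + M_i\Delta^2(r_1+\cdots+r_i)^2$), and use the same cancellations $m_i(2^i-1)^2 \le 4m$ and $m_i r_i^2 = mr^2$ to reach the per-group bound $40kmr^2\Delta^2$. The only difference is cosmetic bookkeeping, and your explicit note that $\Delta\ge 1$ is needed to absorb the $2kmr^2$ term is a point the paper leaves implicit (it holds since $\Delta$ is defined as a ceiling of a positive quantity).
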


\begin{proof}
Since the number of covered clusters among $G_1, ..., G_{k-1}$ is $s$, the number of uncovered clusters is given by $(k-s-1)$.
Let $G_i$ be any such uncovered cluster.
Since $\xi$ happens, there is a center at location $(0, 0)$.
Therefore, the contribution of $G_i$ to $\Phi(X_u)$ can be upper bounded by the quantity $\Phi_{\{(0, 0)\}}(G_i)$. 
This can be computed in the following manner:
\begin{eqnarray*}
&&\Phi_{\{(0, 0)\}}(G_i)\\
&=& \Phi_{\{(i, 0)\}}(G_i) + M_i \cdot \Delta^2  \cdot (r_1 + r_2 + ... + r_i)^2 \\
&=& \Phi_{\{(i, 0)\}}(G_i) + M_i \cdot \Delta^2 \cdot (2^i - 1)^2 \cdot r^2 \\
&=& \Phi_{\{(i, 0)\}}(G_i) + (4k + 2 \omega(k)) \frac{m}{4^{i-1}} \cdot \Delta^2 \cdot (2^i - 1)^2 \cdot r^2 \\
&\leq& \Phi_{\{(i, 0)\}}(G_i) + (4k + (8/3)) \cdot (4 m r^2 \Delta^2) \\
&=& 2 \cdot \sum_{j=1}^{k} \frac{m_i}{4^{j-1}} \cdot (2^{j-1}r_i)^2 + (4k + (8/3)) \cdot (4 m r^2 \Delta^2) \\
&=& 2 k m r^2 + (4k + (8/3)) \cdot (4 m r^2 \Delta^2) \\
&\leq& (40 k) m r^2 \Delta^2
\end{eqnarray*}
Hence, the total contribution from the uncovered clusters $\Phi(X_u)$ is upper bounded by 
$(40 k) (k-s-1)m r^2 \Delta^2$.
\end{proof}

We will also need a lower bound on $\Phi(X_u)$. This is given in the next lemma.

\begin{lemma}\label{lemma:13}
$\Phi(X_u) \geq 4k (k-s-1) m r^2 \Delta^2$.
\end{lemma}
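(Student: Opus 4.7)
The plan is to lower bound the contribution of each uncovered cluster $G_i$ to $\Phi(X_u)$ by the squared horizontal distance from $G_i$'s $x$-coordinate $x_i$ to the nearest other group's $x$-coordinate, and then sum over the $k-s-1$ uncovered clusters among $G_1,\ldots,G_{k-1}$.

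The first step is to observe that since k-means++ picks centers from the input points, every chosen center has $x$-coordinate in the set $\{x_0, x_1, \ldots, x_{k-1}\}$. Because $G_i$ is uncovered, no chosen center has $x$-coordinate $x_i$. Hence for any point $p \in G_i$ (with $x$-coordinate $x_i$) and any chosen center $c$ (with $x$-coordinate $x_j$ for some $j \neq i$),
\begin{equation*}
D(p,c) \geq (x_i - x_j)^2 \geq \min_{j' \neq i}(x_i - x_{j'})^2.
\end{equation*}

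The second step is to compute $\min_{j \neq i}|x_i - x_j|$ for $i \geq 1$. Using $x_{j+1} - x_j = \Delta \cdot r_{j+1}$ and the fact that $r_1 < r_2 < \cdots < r_{k-1}$, the nearest $x$-coordinate to $x_i$ is $x_{i-1}$, at distance $\Delta \cdot r_i = \Delta \cdot 2^{i-1} r$ (for $i = 1$ this is $x_0$; for $i \geq 2$ this is $x_{i-1}$, which is closer than $x_{i+1}$ whenever $x_{i+1}$ exists). Therefore every point in $G_i$ contributes at least $(\Delta r_i)^2 = \Delta^2 \cdot 4^{i-1} \cdot r^2$ to $\Phi(X_u)$.

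The third step is to sum. Using $M_i = \frac{m}{4^{i-1}} \cdot (4k + 2\omega(k))$ from equation~(\ref{eqn:mass}) with $\omega(k) \geq 0$, the total contribution of cluster $G_i$ to $\Phi(X_u)$ is at least
\begin{equation*}
M_i \cdot (\Delta r_i)^2 = \frac{m}{4^{i-1}} \cdot (4k + 2\omega(k)) \cdot \Delta^2 \cdot 4^{i-1} \cdot r^2 \geq 4k\, m\, r^2\, \Delta^2.
\end{equation*}
Since the $4^{i-1}$ factors cancel, the bound is independent of $i$. Summing over the $(k-s-1)$ uncovered clusters among $G_1,\ldots,G_{k-1}$ yields $\Phi(X_u) \geq 4k(k-s-1)\, m\, r^2\, \Delta^2$, as required.

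There is no serious obstacle here; the only thing to be careful about is that the argument uses only the horizontal component of distance and that the chosen centers necessarily sit at one of the $k$ discrete $x$-coordinates, which forces a horizontal gap of at least $\Delta r_i$ from any point in an uncovered $G_i$. The cancellation of $4^{i-1}$ against $m_i = m/4^{i-1}$ is precisely what gives a uniform per-cluster bound, matching the structure of the upper bound in Lemma~\ref{lemma:12}.
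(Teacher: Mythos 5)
Your proof is correct and follows essentially the same route as the paper's: lower-bound the contribution of each point in an uncovered $G_i$ by the squared horizontal gap $(\Delta r_i)^2$ to the nearest other group, and note that $M_i r_i^2 = M_i\, 4^{i-1} r^2 \geq 4k\,m\,r^2$ uniformly in $i$. You merely spell out the justification for the per-point bound (centers must sit at one of the $k$ discrete $x$-coordinates) that the paper leaves implicit.
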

\begin{proof}
Let $G_i$ be an uncovered cluster for some $i \geq 1$.
For any location $(i, j)$, the contribution of the points at this location to $\Phi(X_u)$ is at least $r_i^2 \Delta^2$ times the number of points at that location. So we have:
\begin{eqnarray*}
\Phi(X_u) &\geq& \sum_{\{i | G_i \textrm{ uncovered}\}} M_i \cdot r_i^2 \Delta^2 \\
&=& \sum_{\{i | G_i \textrm{ uncovered}\}} m_i (4k + 2 \omega(k)) \cdot r_i^2 \Delta^2 \\
&\geq& \sum_{\{i | G_i \textrm{ uncovered}\}} 4k \cdot m r^2 \Delta^2 \\
&\geq& 4k (k-s-1) \cdot m r^2 \Delta^2
\end{eqnarray*}
\end{proof}

Since most of our bounds have the term $k-1$, we define $\bar{k} = k-1$ and do the remaining analysis in terms of $\bar{k}$.
Note that all the bounds on $\Phi(X_u)$ and $\Phi(X_c)$ are dependent only on $s$ and not on $t$.
This allows us to define the following quantity that will be used in the remaining analysis. 
This is an upper bound on the ratio $\frac{\Phi_{u}(X)}{\Phi_{c}(X)}$ obtained from Lemmas~\ref{lemma:11} and \ref{lemma:12}.
\begin{equation}\label{eqn:z_s}
z_s \stackrel{def}{=} \frac{(\bar{k}-s)(80 \Delta^2)}{s-1/2} = \frac{(k-s-1)(80 \Delta^2)}{s-1/2}
\end{equation}

We now get a bound on the number of clusters among $G_1, ..., G_{\bar{k}}$ that are needed to be covered to achieve an approximation factor of $\alpha$ for a fixed $\alpha$.
For any such fixed approximation factor $\alpha$, we define the following quantities that will be used in the analysis.
\begin{equation}
u \stackrel{def}{=} \frac{\alpha}{2 \Delta^2} \quad \textrm{and} \quad 
s^* \stackrel{def}{=} \lceil \bar{k} \cdot (1 - u) \rceil
\end{equation}

\begin{lemma}
Any $\alpha$-approximate clustering covers $G_0$ and at least $s^*$ clusters among $G_1, ..., G_{\bar{k}}$.
\end{lemma}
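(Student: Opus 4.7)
The plan is to prove the two claims separately, each by contradiction, converting a failure of coverage into a potential lower bound that exceeds $\alpha \Phi^*$.

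For the first claim, suppose $G_0$ is not covered. Then each of the $12 k 2^k m$ points located at $(0,0)$ is assigned to a center whose $x$-coordinate is at least $x_1 = \Delta r_1 = \Delta r$, since the closest non-$G_0$ candidate location is $(1,0)$. Hence the clustering's potential is at least $12 k 2^k m r^2 \Delta^2$. Comparing against $\alpha \Phi^* = 2 \alpha k \bar{k} m r^2$ from equation~\ref{eqn:optimal-potential}, the exponential factor $2^k$ dwarfs the polynomial $\alpha \bar{k}$ for any $\alpha = O(\log k)$ and reasonable $\Delta$, so the bound $12 k 2^k m r^2 \Delta^2 > \alpha \Phi^*$ holds once $k$ is large, contradicting the $\alpha$-approximation assumption.

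For the second claim, suppose fewer than $s^*$ of the clusters $G_1, \dots, G_{\bar{k}}$ are covered, so $s < s^* = \lceil \bar{k}(1-u) \rceil$. Since $s$ is an integer, $s \leq s^* - 1$; splitting into the two cases according to whether $\bar{k}(1-u)$ is an integer shows that in either case $s < \bar{k}(1-u)$, equivalently $\bar{k} - s > u \bar{k}$. The total potential of the clustering is at least $\Phi(X_u)$, so by Lemma~\ref{lemma:13},
\[
\Phi(X_u) \;\geq\; 4k (\bar{k} - s) m r^2 \Delta^2 \;>\; 4k \bar{k} u\, m r^2 \Delta^2 \;=\; 4k \bar{k} \cdot \tfrac{\alpha}{2 \Delta^2} \cdot m r^2 \Delta^2 \;=\; 2 \alpha k \bar{k} m r^2 \;=\; \alpha \Phi^*,
\]
contradicting the $\alpha$-approximation assumption.

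The only subtlety is handling the ceiling in the definition of $s^*$: one must verify that $s < s^*$ (with $s$ integer) yields the strict inequality $\bar{k} - s > u \bar{k}$, which is exactly what makes the chain of inequalities tight. Once this is in place, the rest collapses because $u = \alpha/(2 \Delta^2)$ was chosen precisely so that $4 k \bar{k} u m r^2 \Delta^2 = \alpha \Phi^*$. The first claim, meanwhile, is essentially a free consequence of the exponential weight $12 k 2^k m$ placed at the origin in the construction of the bad instance.
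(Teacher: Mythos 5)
Your proof is correct and takes essentially the same route as the paper: the key step in both is the lower bound $\Phi(X_u) \geq 4k(\bar{k}-s)mr^2\Delta^2$ from Lemma~\ref{lemma:13} compared against $\alpha\Phi^{*} = 2\alpha k\bar{k}mr^2$, with the choice $u = \alpha/(2\Delta^2)$ making the two sides meet exactly; you phrase it as a contradiction while the paper derives $s \geq \bar{k}(1-u)$ directly, but these are the same argument. You also explicitly verify the coverage of $G_0$ via the $12k2^k m$ points at the origin and carefully handle the ceiling in $s^*$ --- both points the paper's own proof leaves implicit --- so if anything your write-up is more complete.
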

\begin{proof}
The optimal potential is given by $\Phi^{*} = 2k \bar{k}m r^2$ (by (\ref{eqn:optimal-potential})).
Consider any $\alpha$-approximate clustering. 
Suppose this clustering covers $s$ clusters among $G_1, ..., G_{\bar{k}}$. 
Let the covered and uncovered clusters be denoted by $X_c$ and $X_u$ respectively.
Then we have:
\[
\alpha = \frac{\Phi(X)}{\Phi^{*}} \geq \frac{\Phi(X_u)}{\Phi^{*}} \geq \frac{4k (\bar{k}-s)m r^2 \Delta^2}{2k \bar{k} m r^2} \geq \frac{2 (\bar{k}-s) \Delta^2}{\bar{k}}
\]
The second inequality above is using Lemma~\ref{lemma:13}.
This means that the number of covered clusters among $G_1, ..., G_{k-1}$ should satisfy
\[
s \geq \left\lceil \bar{k} \cdot \left( 1 - \frac{\alpha}{2 \Delta^2} \right) \right\rceil = s^*.
\]
\end{proof}

\begin{figure}
\centering
\includegraphics[scale=0.3]{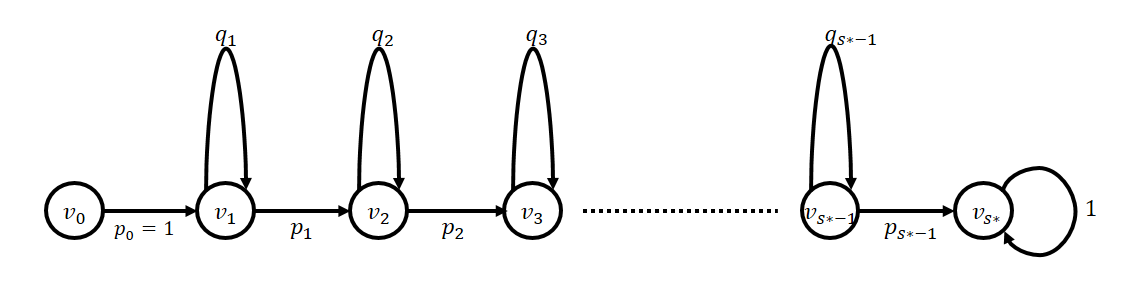}
\caption{Markov chain used for analyzing the algorithm.}
\label{fig:2}
\end{figure}

We analyze the behavior of the k-means++ seeding algorithm with respect to the number of covered optimal clusters using a Markov chain (see Figure~\ref{fig:2}). 
This Markov chain is almost the same as the Markov chain used to analyze the bad instance by Brunsch and R\"{o}glin~\cite{br12}.
In fact, the remaining analysis will mostly mimic that analysis in \cite{br12}. 
%For completeness and readability we give the explicit proof here instead of pointing the reader to \cite{br12}.
The next lemma formally relates the probability that the algorithm achieves an $\alpha$ approximation to the Markov chain reaching its end state.
We analyze this Markov chain in the next subsection.

\begin{lemma}
Let $p_0 = 1$ and for $s = 1, 2, ..., s^*$, let 
$
%z_s = \frac{(\bar{k}-s) \cdot (80 \Delta^2)}{(s-1/2)} \quad \textrm{and} \quad 
p_s = \frac{1}{1 + \frac{1}{z_s}}
$
We consider the linear Markov chain with states $v_0, v_1, ..., v_{s^*}$ with starting state $v_0$ (see Figure~\ref{fig:2}).
Edges $(v_s, v_{s+1})$ have transition probabilities $p_s$ and the self-loops $(v_s, v_s)$ have transition probabilities $q_s = (1 - p_s)$.
Then the probability that the k-means++ seeding algorithm gives an $\alpha$-approximate solution is upper bounded by the probability that the state $v_{s^*}$ is reached by the Markov chain within $\bar{k}$ steps.
\end{lemma}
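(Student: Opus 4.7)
The plan is to work conditional on the event $\xi$ throughout, and to track $S_t$, the number of clusters among $G_1, \ldots, G_{\bar{k}}$ covered after the $(1+t)$-th iteration of the seeding algorithm. By construction $S_0 = 0$, while $S_{t+1} - S_t \in \{0,1\}$ at every step, and by the previous lemma any $\alpha$-approximate clustering forces $S_{\bar{k}} \geq s^*$. Hence it suffices to upper bound $\pr[\,S_{\bar{k}} \geq s^* \mid \xi\,]$ by the probability that the chain reaches $v_{s^*}$ within $\bar{k}$ steps.

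First I would compute the conditional probability that iteration $t+2$ covers a previously uncovered cluster. Under $\xi$ the center at $(0,0)$ contributes $0$ to the potential of the points in $G_0$, so this probability is exactly
\[
\pi_t \;=\; \frac{\Phi(X_u)}{\Phi(X_u) + \Phi(X_c)} \;=\; \frac{1}{1 + \Phi(X_c)/\Phi(X_u)}.
\]
For $s \geq 1$, combining Lemma~\ref{lemma:11} (lower bound on $\Phi(X_c)$) with Lemma~\ref{lemma:12} (upper bound on $\Phi(X_u)$) yields $\Phi(X_u)/\Phi(X_c) \leq z_s$, and since $x \mapsto x/(1+x)$ is increasing, this gives $\pi_t \leq z_s/(1+z_s) = p_s$. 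When $s = 0$, $\Phi(X_c) = 0$ forces $\pi_t = 1 = p_0$, matching the chain at its starting state.

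Next I would convert these one-step inequalities into the desired comparison via a standard coupling: draw one $U_{t+1} \sim \mathrm{Uniform}[0,1]$ per iteration and use it to drive both processes, letting the algorithm advance exactly when $U_{t+1} \leq \pi_t$ and the chain advance exactly when $U_{t+1} \leq p_{V_t}$. A short induction then shows $V_t \geq S_t$ for every $t \leq \bar{k}$. Two facts make this work: first, $s \mapsto p_s$ is non-increasing (immediate from the formula for $z_s$, whose numerator decreases and denominator increases with $s$); and second, whenever $V_t = S_t = s$, the chain's threshold $p_s$ dominates $\pi_t$, so the chain advances at every step the algorithm does, while in any step with $V_t > S_t$ the algorithm can advance by at most one, preserving $V_{t+1} \geq S_{t+1}$. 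The inequality $\pr[S_{\bar{k}} \geq s^* \mid \xi] \leq \pr[V_{\bar{k}} \geq s^*]$ then follows, giving the lemma.

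The main obstacle is really only the coupling itself; the mildly delicate verification it requires is the monotonicity of $s \mapsto p_s$, without which an early lead of the chain over the algorithm could later be lost. Since the one-step bounds $\pi_t \leq p_s$ already encapsulate all the geometric information packaged in Lemmas~\ref{lemma:11} and~\ref{lemma:12}, the rest of the argument is a direct unrolling.
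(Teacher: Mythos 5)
Your proof is correct and takes essentially the same approach as the paper: the paper's entire argument is the one-step observation that, conditioned on $\xi$, the probability of covering a new cluster is $\frac{\Phi(X_u)}{\Phi(X_u)+\Phi(X_c)} \leq \frac{1}{1+\frac{1}{z_s}} = p_s$ (via Lemmas~\ref{lemma:11} and \ref{lemma:12}), with the resulting stochastic domination left implicit, whereas you spell out the coupling. One small remark: the monotonicity of $s \mapsto p_s$ is not actually needed, since your own case analysis for $V_t > S_t$ already preserves the invariant using only that the algorithm advances by at most one cluster per step.
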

\begin{proof}
The proof is trivial from the observation that the probability that a previously uncovered cluster will be covered in iteration $i >2$ is given by $\frac{\Phi(X_u)}{\Phi(X_c) + \Phi(X_u)} \leq \frac{1}{1 + \frac{1}{z_s}} = p_s$.
\end{proof}

\subsection{Definitions and inequalities}
A number of quantities will be used for the analysis of the Markov chain. 
%We use this clearly marked subsection, to define these quantities and state the relationships between them. 
The reader is advised to refer to this subsection when reading the next subsection dealing with the analysis of the Markov chain.
The following quantities written as a function of $\bar{k}$ will be used in the analysis:
\begin{eqnarray}
\alpha(\bar{k}) &=& \delta \cdot \log{\bar{k}} \label{eqn:E1} \\
 \eps(\bar{k}) &=& \frac{1}{120} \cdot \frac{\log{\alpha(\bar{k})}}{\alpha(\bar{k})} \label{eqn:E2} \\
 \Delta(\bar{k}) &=& \left\lceil \sqrt{\alpha(\bar{k})} \cdot \exp \left(80 \cdot \alpha(\bar{k}) \cdot \frac{1 + \eps(\bar{k})}{4} \right)\right\rceil  \label{eqn:E3} \\
 u(\bar{k}) &=& \frac{\alpha(\bar{k})}{2 \Delta^2(\bar{k})}  \label{eqn:E4} \\
 s^*(\bar{k}) &=& \lceil \bar{k} \cdot (1 - u(\bar{k}))\rceil  \label{eqn:E5}\\
 z_s(\bar{k}) &=& \frac{(\bar{k}-s) \cdot (80 \Delta^2)}{s - 1/2} \label{eqn:E6}\\
 p_s(\bar{k}) &=& \frac{1}{1 + \frac{1}{z_s(\bar{k})}} \label{eqn:E7}
\end{eqnarray}

We will also use the following inequalities. Here, whenever we say that $f(\bar{k}) \leq g(\bar{k})$ for two functions $f$ and $g$, we actually mean to say that $f(\bar{k}) \leq g(\bar{k})$ for all sufficiently large $\bar{k}$.
\begin{eqnarray}
&& \frac{1}{k} \leq u(\bar{k}) < \frac{1}{2} \label{eqn:I1}\\
&& (1 + 40 \alpha(\bar{k}))^{\Delta(\bar{k})} \geq \frac{1}{u^2(\bar{k})} \label{eqn:I2}\\
&& \frac{1}{\bar{k}} \leq \frac{\eps(\bar{k})}{9} \label{eqn:I3}\\
&& \frac{1}{80 \Delta^2(\bar{k})} \leq \frac{\eps(\bar{k})}{3} \cdot u(\bar{k}) \label{eqn:I4}\\
&& u(\bar{k}) + \frac{\eps(\bar{k})}{3} \cdot \left( 1 + \frac{\eps(\bar{k})}{3}\right)\cdot u^2(\bar{k}) \leq \left( \frac{\eps(\bar{k})}{3}\right)^2 \label{eqn:I5}
\end{eqnarray}

Except for inequality (\ref{eqn:I2}), all the inequalities are the same as in \cite{br12}. 
We refer the reader to \cite{br12} for the correctness of these inequalities.
As for (\ref{eqn:I2}), note that $(1 + 40 \alpha(\bar{k}))^{\Delta(\bar{k})} \geq 2^{\Delta(\bar{k})} = 2^{\Omega(\sqrt{\alpha(\bar{k})} \cdot e^{\alpha(\bar{k})/4})}$ and $1/u^2(\bar{k}) = O(e^{2\alpha(\bar{k})})$.
So, for sufficiently large values of $\bar{k}$, the inequality is true.

For the remaining analysis, we will assume that the value of $\bar{k}$ is fixed such that inequalities (\ref{eqn:I1}), (\ref{eqn:I2}), (\ref{eqn:I3}), (\ref{eqn:I4}), and (\ref{eqn:I5}) are true. 
Given this, we will avoid using the functional notation and simply use the name of the quantities. For example, we will use $u$ instead of $u(\bar{k})$ and $\eps$ instead of $\eps(\bar{k})$ etc.

\subsection{Analysis of Markov chain}

We now analyze the Markov chain and upper bound the probability of this Markov chain reaching the state $v_{s^*}$ within $\bar{k}$ steps. 
To be able to do so, we define random variables $X_0, X_1, ..., X_{s^*-1}$, where the $X_s$ denotes the number of steps to move from state $v_s$ to state $v_{s+1}$.
We consider the random variable $X = \sum_{s=0}^{s^* - 1} X_s$.
We would like to show that the expected value of $X$ is much larger than $\bar{k}$ and then use the Hoeffding inequality to bound the probability. 
%To be able to bound the probability 
To do this using the well known Hoeffding bound, we need to have a bound on the value of each of the random variables. 
So, we define related random variables $Y_0, Y_1, ..., Y_{s^*-1}$, where $Y_s = \min(X_s, \Delta)$. 
We will analyze the random variable $Y = \sum_{s=0}^{s^*-1} Y_s \leq X$.
We will use the following lemma from \cite{br12}.

\begin{lemma}[Claim 5 from \cite{br12}]
The expected value of $X_s$ is $1/p_s$ and the expected value of $Y_s$ is $(1-q_s^{\Delta})/p_s$.
\end{lemma}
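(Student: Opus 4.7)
The plan is to recognize that $X_s$ is simply a geometric random variable. At state $v_s$, each step of the Markov chain is an independent trial: with probability $p_s$ we advance to $v_{s+1}$ (success) and with probability $q_s = 1 - p_s$ we remain at $v_s$ (failure). Since $X_s$ is the number of steps until the first success, it has a geometric distribution with parameter $p_s$, whose expected value is the standard $1/p_s$. This handles the first half of the claim with essentially no work.

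For the second half, I would compute $E[Y_s]$ using the tail-sum identity $E[Y_s] = \sum_{i \geq 1} \pr[Y_s \geq i]$, which is convenient because $Y_s = \min(X_s, \Delta)$ takes only nonnegative integer values bounded by $\Delta$. For $1 \leq i \leq \Delta$ we have $\pr[Y_s \geq i] = \pr[X_s \geq i] = q_s^{i-1}$, since $X_s \geq i$ exactly means the first $i-1$ trials were failures; for $i > \Delta$ the probability is zero by truncation. Thus
\begin{equation*}
E[Y_s] \;=\; \sum_{i=1}^{\Delta} q_s^{i-1} \;=\; \frac{1 - q_s^{\Delta}}{1 - q_s} \;=\; \frac{1 - q_s^{\Delta}}{p_s},
\end{equation*}
using the finite geometric-series formula and the identity $1 - q_s = p_s$.

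There is essentially no obstacle here; the lemma is a direct calculation about a geometric random variable and its truncation. The only thing to be slightly careful about is the convention for the geometric distribution (counting trials from $1$ versus failures from $0$), but the definition in the paper — $X_s$ is the number of steps to move from $v_s$ to $v_{s+1}$ — makes it unambiguous that $X_s \in \{1, 2, \ldots\}$ with $\pr[X_s = j] = q_s^{j-1} p_s$, so both formulas come out as stated.
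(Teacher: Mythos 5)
Your proof is correct. Note that the paper itself does not prove this lemma at all---it is imported verbatim as Claim 5 from \cite{br12}---so there is no in-paper argument to compare against; your calculation (recognizing $X_s$ as geometric with parameter $p_s$, then applying the tail-sum formula $\ex[Y_s]=\sum_{i\geq 1}\pr[Y_s\geq i]=\sum_{i=1}^{\Delta}q_s^{i-1}=(1-q_s^{\Delta})/p_s$ to the truncation) is the standard one and correctly supplies the omitted details, including the right convention $X_s\in\{1,2,\ldots\}$.
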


The next lemma relates the expected values of $X_s$ and $Y_s$.

\begin{lemma}[Similar to Lemma 6 in \cite{br12}]
$\frac{\ex[Y_s]}{\ex[X_s]} \geq 1 - u^2$.
\end{lemma}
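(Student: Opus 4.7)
The plan is to start from the explicit formulas $\ex[X_s] = 1/p_s$ and $\ex[Y_s] = (1-q_s^\Delta)/p_s$ given by the preceding lemma, which immediately yields
\[
\frac{\ex[Y_s]}{\ex[X_s]} \;=\; 1 - q_s^\Delta.
\]
So the whole lemma reduces to establishing $q_s^\Delta \leq u^2$. Since $p_s = \frac{1}{1+1/z_s}$, one checks that $q_s = 1 - p_s = \frac{1}{1+z_s}$, so the target inequality becomes $(1+z_s)^\Delta \geq 1/u^2$.

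Next I would lower-bound $z_s$ uniformly for $s$ in the relevant range. The case $s=0$ is trivial since $p_0 = 1$ forces $X_0 = Y_0 = 1$ and the ratio equals $1$; so I may assume $1 \leq s \leq s^*-1$. From the definition $s^* = \lceil \bar{k}(1-u)\rceil$, for $s \leq s^*-1$ we have $s < \bar{k}(1-u)$, i.e.\ $\bar{k} - s > \bar{k} u$. Simultaneously, $s - 1/2 \leq s^* - 3/2 \leq \bar{k}$. Plugging these into (\ref{eqn:E6}) gives
\[
z_s \;=\; \frac{(\bar{k}-s)\,(80\Delta^2)}{s - 1/2} \;>\; \frac{\bar{k}u \cdot 80\Delta^2}{\bar{k}} \;=\; 80\,u\,\Delta^2.
\]
Now using the definition $u = \alpha/(2\Delta^2)$ from (\ref{eqn:E4}), we get $u\Delta^2 = \alpha/2$, and hence $z_s > 40\alpha$.

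Finally, I would invoke inequality (\ref{eqn:I2}), which says $(1+40\alpha)^\Delta \geq 1/u^2$, to conclude
\[
(1+z_s)^\Delta \;\geq\; (1+40\alpha)^\Delta \;\geq\; \frac{1}{u^2},
\]
so $q_s^\Delta \leq u^2$ and therefore $\ex[Y_s]/\ex[X_s] \geq 1 - u^2$, as required. I expect no serious obstacle here: the whole argument is a chain of elementary manipulations, and the one non-trivial step—showing $(1+z_s)^\Delta$ is at least $1/u^2$—has been pre-packaged as inequality (\ref{eqn:I2}). The only mild subtlety is bookkeeping the range $1 \leq s \leq s^*-1$ correctly so that both the lower bound $\bar{k}-s > \bar{k}u$ and the upper bound $s - 1/2 \leq \bar{k}$ hold simultaneously; this is where the choice of $s^*$ from (\ref{eqn:E5}) pays off.
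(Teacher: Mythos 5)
Your proposal is correct and follows essentially the same route as the paper: write the ratio as $1-q_s^{\Delta}$, lower-bound $z_s$ by $40\alpha$ using $s\leq s^*-1$ together with $u=\alpha/(2\Delta^2)$, and finish with inequality (\ref{eqn:I2}). The only cosmetic difference is that you bound the denominator $s-1/2$ by $\bar{k}$ rather than by $\bar{k}(1-u-\frac{1}{2\bar{k}})$ as the paper does, which yields the same final bound $z_s\geq 40\alpha$.
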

\begin{proof}
First we get a lower bound on $z_s$ in the following manner:
\begin{eqnarray*}
z_s &=& \frac{(\bar{k}-s) (80 \Delta^2)}{s - 1/2} \\
%
%&\geq& \frac{(\bar{k}-s^*) (80 \Delta^2)}{s^* - 1/2}\\
%
&\geq&  \frac{u \cdot (80 \Delta^2)}{1 - u - \frac{1}{2\bar{k}}} \quad \textrm{(since $s \leq s^*-1 \leq \bar{k}(1-u)$)} \\
&=& \frac{40 \alpha}{1 - u - \frac{1}{2\bar{k}}} \quad \textrm{(using (\ref{eqn:E4}))} \\
&\geq& 40 \alpha \quad \textrm{(using (\ref{eqn:I1}))}
\end{eqnarray*}
Also, from the previous lemma we have:
\begin{eqnarray*}
\frac{\ex[Y_s]}{\ex[X_s]} &=& 1 - q_s^{\Delta}= 1 - (1 - p_s)^{\Delta} = 1 - \left( \frac{1}{1 + z_s} \right)^{\Delta}\\
			  &\geq& 1 - \left( \frac{1}{1 + 40 \alpha} \right)^{\Delta} \geq 1 - u^2.
\end{eqnarray*}
The last inequality used (\ref{eqn:I2}).
\end{proof}

Next, we we get a lower bound on $\ex[X]$.

\begin{lemma}[Similar to Lemma 7 in \cite{br12}]
$\frac{\ex[X]}{\bar{k}} \geq 1 + \frac{\eps}{3} \cdot \left( 1 + \frac{\eps}{3}\right)\cdot u$.
\end{lemma}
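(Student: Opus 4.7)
The plan is to apply linearity of expectation to $X = \sum_{s=0}^{s^*-1} X_s$, reduce the target to a per-$\bar{k}$ inequality on a harmonic-like sum, extract the dominant growth from the exponentially large $\Delta$ supplied by (\ref{eqn:E3}), and finally absorb three small error contributions using the asymptotic inequalities of Section~3.1. First, the preceding lemma gives $\ex[X_s] = 1/p_s$; since $p_0 = 1$ and $p_s = 1/(1 + 1/z_s)$ for $s \geq 1$, linearity yields
\[
\ex[X] \;=\; s^* \;+\; \sum_{s=1}^{s^*-1} \frac{1}{z_s}.
\]
Combining with $s^* \geq \bar{k}(1-u)$ from (\ref{eqn:E5}), it suffices to show
\[
\frac{1}{\bar{k}}\sum_{s=1}^{s^*-1}\frac{1}{z_s} \;\geq\; u\bigl(1 + \tfrac{\eps}{3}(1+\tfrac{\eps}{3})\bigr).
\]

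Second, I would expand $1/z_s = (s-1/2)/((\bar{k}-s)\cdot 80\Delta^2)$ and apply the split $s - 1/2 = (\bar{k}-1/2) - (\bar{k}-s)$ to telescope the sum into
\[
\sum_{s=1}^{s^*-1}\frac{1}{z_s} \;=\; \frac{(\bar{k}-1/2)\,T - (s^*-1)}{80\Delta^2}, \qquad T \;:=\; \sum_{j=j_0}^{\bar{k}-1}\frac{1}{j},
\]
where $j_0 := \bar{k} - s^* + 1 \leq \bar{k}u + 1$. The standard integral bound gives $T \geq \ln(\bar{k}/j_0) \geq \ln(1/u) - O(1/(\bar{k}u))$, and the correction term is negligible because $\delta < 1/120$ together with (\ref{eqn:E3}) forces $\bar{k}u \to \infty$. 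Now (\ref{eqn:E3}) feeds the main term: $1/u = 2\Delta^2/\alpha \geq 2\exp(40\alpha(1+\eps))$ gives $T \geq 40\alpha(1+\eps)$, and the identity $40\alpha/(80\Delta^2) = u$ (from (\ref{eqn:E4})) converts this into a leading-order lower bound of $u(1+\eps)$ on $\tfrac{1}{\bar{k}}\sum 1/z_s$, which is comfortably larger than the target $u(1 + \tfrac{\eps}{3}(1+\tfrac{\eps}{3}))$.

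The main obstacle will be the careful accounting of three small error sources: (a) the $-1/2$ correction that turns $(\bar{k}-1/2)T$ into the cleaner $\bar{k}T$, (b) the subtracted $-(s^*-1)/(80\Delta^2\bar{k})$ term, and (c) the gap between $j_0$ and $\bar{k}u$ that slightly weakens the harmonic estimate. The cushion available to absorb them is $u\bigl(\tfrac{2\eps}{3} - \tfrac{\eps^2}{9}\bigr)$, i.e.\ the gap between the $u(1+\eps)$ the growth bound delivers and the $u(1 + \tfrac{\eps}{3}(1+\tfrac{\eps}{3}))$ the statement demands. Inequality (\ref{eqn:I4}) handles the $1/\Delta^2$-scale terms via $1/(80\Delta^2) \leq (\eps/3)u$, while (\ref{eqn:I3}) (which yields $\bar{k}\eps \geq 9$ in the asymptotic regime) handles the additive $O(1/\bar{k})$-scale errors; together they fit safely under the cushion, closing the proof.
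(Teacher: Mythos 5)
Your proposal is correct and takes essentially the same route as the paper's proof: linearity of expectation with $1/p_s = 1 + 1/z_s$, re-indexing the sum into a harmonic sum lower-bounded by $\log\left(\bar{k}/(\bar{k}-s^*+1)\right)$, extracting the main term $u(1+\eps)$ from (\ref{eqn:E3})--(\ref{eqn:E4}), and absorbing the small error terms via (\ref{eqn:I1}), (\ref{eqn:I3}), (\ref{eqn:I4}). The only cosmetic difference is that you dispose of the $\log\left(1/(u+1/\bar{k})\right)$ correction by noting $\bar{k}u \to \infty$, whereas the paper uses (\ref{eqn:I1}) to pass to $\log\left(1/(2u)\right)$ and lets the factor of $2$ be absorbed exactly by the definition of $\Delta$.
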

\begin{proof}
We can lower-bound $\ex[X]$ in the following manner:
\begin{eqnarray*}
&&\ex[X]=\sum_{s=0}^{s^*-1} \frac{1}{p_s}\\
&=& 1 + \sum_{s=1}^{s^*-1} \left( 1 + \frac{s-1/2}{(\bar{k}-s) (80 \Delta^2)}\right) \\
&=& s^* + \sum_{i=\bar{k}-s^*+1}^{k-1} \frac{\bar{k}-1/2-i}{i \cdot (80 \Delta^2)} \\
&\geq& s^* - \frac{s^*-1}{80 \Delta^2} + \frac{\bar{k}-1}{80 \Delta^2} \cdot \sum_{i=\bar{k}-s^*+1}^{\bar{k}-1} \frac{1}{i} \\
&\geq& s^* \cdot \left(1 - \frac{1}{80 \Delta^2}\right) + \frac{\bar{k}-1}{80 \Delta^2} \cdot \log \left( \frac{\bar{k}}{\bar{k}-s^*+1}\right) \\
\end{eqnarray*}
Since $s^* \geq \bar{k}(1-u)$, we can write,
\begin{eqnarray*}
&&\ex[X]\\
&\geq& \bar{k} (1-u) \left(1 - \frac{1}{80 \Delta^2}\right) + \frac{\bar{k}-1}{80 \Delta^2} \cdot \log \left( \frac{\bar{k}}{\bar{k}-\bar{k}(1-u)+1}\right)\\
&\geq& \bar{k} \left(1 - u - \frac{1}{80 \Delta^2} + \frac{\bar{k}-1}{\bar{k}} \cdot \frac{1}{80 \Delta^2} \cdot \log \left( \frac{1}{u + \frac{1}{\bar{k}}}\right)\right)
\end{eqnarray*}

Using this, we have:
\begin{eqnarray*}
&&\frac{\ex[X]}{\bar{k}}\\
&\geq& \left(1 - u - \frac{\eps u}{3} + \frac{\bar{k}-1}{\bar{k}} \cdot \frac{1}{80  \Delta^2} \cdot \log \left( \frac{1}{u + \frac{1}{\bar{k}}}\right)\right) \quad \textrm{(using (\ref{eqn:I4}))}\\
&\geq& \left(1 - u \left(1+\frac{\eps}{3}\right)  + \frac{\bar{k}-1}{\bar{k}} \cdot \frac{1}{80 \Delta^2} \cdot \log \left( \frac{1}{2u}\right)\right) \quad \textrm{(using (\ref{eqn:I1}))}\\
&\geq& \left(1 - u \left(1+\frac{\eps}{3}\right)  + \left(1 -\frac{\eps}{9}\right)\cdot \frac{1}{80 \Delta^2} \cdot \log \left( \frac{1}{2u}\right)\right) \quad \textrm{(using (\ref{eqn:I3}))}\\
&=& \left(1 - u \left(1+\frac{\eps}{3}\right)  + \left(1 - \frac{\eps}{9}\right)\cdot \frac{1}{80 \Delta^2} \cdot \log \left( \frac{\Delta^2}{\alpha}\right)\right) \quad \textrm{(using (\ref{eqn:E4}))}\\
&\geq& \left(1 - u \left(1+\frac{\eps}{3}\right)  + \left(1 - \frac{\eps}{9}\right) \frac{1}{\Delta^2} \cdot \alpha \cdot \frac{1+\eps}{2}\right) \quad \textrm{(using (\ref{eqn:E3}))}\\
&=& \left(1 - u \left(1+\frac{\eps}{3}\right)  + \left(1 - \frac{\eps}{9}\right) (1+\eps) u\right) \quad \textrm{(using (\ref{eqn:E4}))}\\
&=& 1 + \frac{\eps}{3} \left( 1 + \frac{\eps}{3}\right)  u
\end{eqnarray*}
\end{proof}

Using the previous two lemmas, we can now obtain a lower bound on $\ex[Y]$.

\begin{lemma}[Same as Corollary 8 in \cite{br12}]\label{lemma:19}
$\frac{\ex[Y]}{\bar{k}} \geq 1 + \frac{\eps}{3} \cdot u$.
\end{lemma}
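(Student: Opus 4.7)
The plan is to combine the two preceding lemmas by linearity of expectation and then apply inequality (\ref{eqn:I5}) to simplify.

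First, I would observe that the bound $\ex[Y_s] \geq (1 - u^2)\,\ex[X_s]$, which holds for every $s$, lifts immediately to the sums: since $Y = \sum_{s=0}^{s^*-1} Y_s$ and $X = \sum_{s=0}^{s^*-1} X_s$, linearity of expectation gives
\begin{equation*}
\ex[Y] \;\geq\; (1 - u^2)\,\ex[X].
\end{equation*}
Dividing by $\bar{k}$ and plugging in the lower bound on $\ex[X]/\bar{k}$ from the previous lemma, I obtain
\begin{equation*}
\frac{\ex[Y]}{\bar{k}} \;\geq\; (1 - u^2)\left(1 + \frac{\eps}{3}\Bigl(1 + \frac{\eps}{3}\Bigr) u\right).
\end{equation*}

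Next, I would expand the product and isolate the extra terms relative to the target $1 + \frac{\eps}{3} u$. Writing $A := \frac{\eps}{3}(1+\frac{\eps}{3})$ for brevity, expansion gives $1 + A u - u^2 - A u^3$, so the inequality $\ex[Y]/\bar{k} \geq 1 + \frac{\eps}{3} u$ reduces to showing
\begin{equation*}
A u - \frac{\eps}{3} u \;\geq\; u^2 + A u^3,
\end{equation*}
which (using $A - \frac{\eps}{3} = (\frac{\eps}{3})^2$ and dividing by $u > 0$) is precisely
\begin{equation*}
\Bigl(\frac{\eps}{3}\Bigr)^2 \;\geq\; u + \frac{\eps}{3}\Bigl(1 + \frac{\eps}{3}\Bigr) u^2.
\end{equation*}
This is exactly inequality (\ref{eqn:I5}), so the proof closes.

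There is essentially no obstacle here: all the heavy lifting was done in the two preceding lemmas and in verifying the inequalities of the previous subsection. The only care needed is the bookkeeping of the $(1-u^2)$ multiplicative slack from $\ex[Y] \geq (1-u^2) \ex[X]$ — one must check that it is absorbed by the gap $A - \frac{\eps}{3} = (\eps/3)^2$ between the bound on $\ex[X]/\bar{k}$ and the target bound on $\ex[Y]/\bar{k}$, and that is exactly the content of inequality (\ref{eqn:I5}).
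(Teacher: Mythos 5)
Your proposal is correct and follows essentially the same route as the paper: combine $\ex[Y] \geq (1-u^2)\ex[X]$ with the lower bound on $\ex[X]/\bar{k}$, expand, and absorb the slack via inequality (\ref{eqn:I5}). The paper's algebra is just a slightly different arrangement of the same expansion, factoring out $u$ rather than dividing by it.
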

\begin{proof}
%We obtain the lower bound using the previous two lemmas.
Using the last two lemmas, we have
\begin{eqnarray*}
\frac{\ex[Y]}{\bar{k}} &\geq& (1 - u^2) \cdot \frac{\ex[X]}{\bar{k}} \\
&\geq& (1 - u^2) \cdot \left( 1 + \frac{\eps}{3} \left( 1 + \frac{\eps}{3}\right) u \right) \\
&=& 1 + u \cdot \left( \frac{\eps}{3} \left(1 + \frac{\eps}{3}\right) - \left(u + \frac{\eps}{3} \left( 1 + \frac{\eps}{3}\right) u^2\right)\right)\\
&\geq& 1 + u \cdot \left( \frac{\eps}{3}  \left(1 + \frac{\eps}{3}\right) - \left(\frac{\eps}{3}\right)^2\right) \quad \textrm{(using (\ref{eqn:I5}))}\\
&=& 1 + \frac{\eps}{3} \cdot u
\end{eqnarray*}
\end{proof}

We can finally bound the probability that the Markov chain reaches the state $v_{s^*}$.

\begin{lemma}[Similar to Lemma~9 in \cite{br12}]\label{lemma:20}
The probability that the state $v_{s^*}$ is reached within $\bar{k}$ steps is bounded by $\exp(-\bar{k}^{1 - 120\delta - o(1)})$.
\end{lemma}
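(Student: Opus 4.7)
The plan is to bound the probability $\pr[Y \leq \bar{k}]$ via a one-sided Hoeffding inequality and then turn the resulting exponent into the stated asymptotic form by plugging in the expressions for $u$, $\Delta$, and $\eps$ in terms of $\alpha=\delta\log\bar k$. Since reaching $v_{s^*}$ within $\bar{k}$ steps is the event $X\leq\bar{k}$, and $Y\leq X$ pointwise, it suffices to upper bound $\pr[Y\leq\bar{k}]$.

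First I would observe that the $Y_s$ are independent (each depends only on the Bernoulli trials inside one phase of the chain) and bounded: $1\le Y_s\le\Delta$. Combined with Lemma~\ref{lemma:19}, which gives $\ex[Y]\geq\bar{k}\bigl(1+\tfrac{\eps}{3}u\bigr)$, this means
\begin{equation*}
\pr[Y\leq\bar{k}] \;=\; \pr\!\bigl[Y-\ex[Y]\leq -(\ex[Y]-\bar{k})\bigr] \;\leq\; \pr\!\Bigl[Y-\ex[Y]\leq -\tfrac{\eps u}{3}\bar{k}\Bigr].
\end{equation*}
Applying Hoeffding's inequality to the independent summands $Y_0,\ldots,Y_{s^*-1}$ with ranges of length at most $\Delta$, and using $s^*\leq\bar{k}$ in the denominator, yields
\begin{equation*}
\pr[Y\leq\bar{k}] \;\leq\; \exp\!\left(-\frac{2(\eps u\bar{k}/3)^{2}}{s^{*}\Delta^{2}}\right) \;\leq\; \exp\!\left(-\frac{2\bar{k}\,\eps^{2}u^{2}}{9\Delta^{2}}\right).
\end{equation*}

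The remaining work is to show that the exponent here equals $\bar{k}^{1-120\delta-o(1)}$. Using (\ref{eqn:E4}) to write $u=\alpha/(2\Delta^{2})$, we get $u^{2}/\Delta^{2}=\alpha^{2}/(4\Delta^{6})$; then (\ref{eqn:E3}) gives $\Delta^{2}=\Theta\!\bigl(\alpha\,\exp(40\alpha(1+\eps))\bigr)$, so $\Delta^{6}=\Theta\!\bigl(\alpha^{3}\exp(120\alpha(1+\eps))\bigr)$ and the whole Hoeffding exponent becomes
\begin{equation*}
\frac{2\bar{k}\,\eps^{2}u^{2}}{9\Delta^{2}} \;=\; \Theta\!\left(\frac{\bar{k}\,\eps^{2}}{\alpha\,\exp(120\alpha(1+\eps))}\right).
\end{equation*}
Substituting $\alpha=\delta\log\bar{k}$ turns $\exp(120\alpha(1+\eps))$ into $\bar{k}^{120\delta(1+\eps)}$, and since $\eps=\Theta(\log\alpha/\alpha)=o(1)$ while $\alpha$ and $1/\eps^{2}$ are both polylogarithmic in $\bar{k}$, the factor $\eps^{2}/\alpha$ absorbs harmlessly into the $o(1)$ in the exponent, giving an exponent of $\bar{k}^{1-120\delta-o(1)}$, as required.

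The only slightly delicate step, and the one I would write out most carefully, is the bookkeeping that converts $\exp\!\bigl(-\Theta(\bar{k}^{1-120\delta(1+\eps)}\cdot\eps^{2}/\alpha)\bigr)$ into $\exp(-\bar{k}^{1-120\delta-o(1)})$: one must check that the polylogarithmic factors $\eps^{2}/\alpha$ and the extra $\bar k^{-120\delta\eps}$ together are of the form $\bar k^{-o(1)}$, which they are since $\eps\to 0$ and $\log(\alpha/\eps^{2})=O(\log\log\bar k)$. Beyond that, there is nothing subtle — the proof is essentially a direct appeal to Hoeffding on the truncated chain, with all constants already absorbed into the definitions (\ref{eqn:E1})--(\ref{eqn:E7}).
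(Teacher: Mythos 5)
Your proof follows the paper's argument essentially verbatim: reduce to $\pr[Y\leq\bar{k}]$ via $Y\leq X$, apply Lemma~\ref{lemma:19} and Hoeffding with $s^*\leq\bar{k}$ to get the exponent $2\bar{k}\eps^2u^2/(9\Delta^2)$, and then substitute (\ref{eqn:E3}) and (\ref{eqn:E4}) to identify this as $\bar{k}^{1-120\delta-o(1)}$. If anything you are slightly more careful than the paper in tracking the residual $\bar{k}^{-120\delta\eps}$ factor from the $(1+\eps)$ in the exponent of $\Delta$, which the paper silently absorbs into the $o(1)$; the argument is correct.
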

\begin{proof}
The bound on the probability is obtained through the following calculations:
\begin{eqnarray*}
\pr[X \leq \bar{k}] &\leq& \pr[Y \leq \bar{k}] \quad \textrm{(since $Y \leq X$)} \\
&\leq& \pr\left[\ex[Y] - Y \geq \frac{\eps}{3} \cdot u \cdot \bar{k}\right] \quad \textrm{(by Lemma~\ref{lemma:19})} \\
&\leq& \exp\left( - \frac{2 \cdot (\frac{\eps}{3} \cdot u \cdot \bar{k})^2}{s^* \Delta^2} \right) \quad \textrm{(by Hoeffding bound)}\\
&\leq& \exp\left( - \frac{2 \eps^2u^2\bar{k}^2}{9 k \Delta^2} \right) \\
&=& \exp\left( - \bar{k} \cdot \frac{2 \eps^2u^2}{9 \Delta^2} \right)
\end{eqnarray*}
We will now get a bound on $\frac{2 \eps^2 u^2}{9 \Delta^2}$.
\begin{eqnarray*}
\frac{2 \eps^2 u^2}{9 \Delta^2} &=& \frac{\eps^2 \alpha^2}{18 \Delta^6} \\
&=& \frac{\eps^2 \alpha^2}{18 \cdot \alpha^3 \cdot \exp\left(80 \cdot 6 \cdot \alpha \cdot \frac{1 + \eps}{4}\right)} \quad \textrm{(using (\ref{eqn:E3}))}\\
&=& \frac{\eps^2 \cdot \alpha^{-2} \cdot e^{-120\alpha}}{18} \\
&=& \bar{k}^{-o(1)} \cdot \bar{k}^{-o(1)} \cdot \bar{k}^{-120 \delta}
\end{eqnarray*}
\end{proof}

Now we can put everything together and prove our main theorem.

\begin{proof}[(Proof of main theorem)]
Given that the event $\xi$ occurs, the probability that the k-means++ seeding algorithm gives an approximation factor of at most $(\delta \cdot \log{(k-1)})$ is upper bounded by the probability that the Markov chain reaches the state $v_{s^*}$ in at most $(k-1)$ steps. 
This is bounded by $\exp(-(k-1)^{1 - o(1) - 120 \delta})$ from Lemma~\ref{lemma:20}.
Also, from Lemma~\ref{lemma:event}, we know that $\pr[\neg \xi] \leq 2^{-k}$.
Combining these, we get that the probability that the algorithm gives an approximation factor of $(\delta \cdot \log{k})$ is at most $2^{-k} + \exp(-(k-1)^{1 - o(1) - 120 \delta})$
\end{proof}

\section{Acknowledgements}
Ragesh Jaiswal would like to thank Prachi Jain, Saumya Yadav, Nitin Garg, and Abhishek Gupta for helpful discussions.

\bibliographystyle{abbrv}
\bibliography{paper}
\end{document}